\newtheorem{theorem}{Theorem}
\newtheorem{lemma}[theorem]{Lemma}
\newtheorem{corollary}[theorem]{Corollary}
\theoremstyle{remark}
\newtheorem{remark}{Remark}[theorem]
\newcommand{\R}{\mathbb{R}}
\renewcommand{\L}{\mathbf{L}}
\renewcommand{\l}{\mathbf{l}}
\renewcommand{\O}{\mathbf{\Omega}} 
\newcommand{\A}{\mathbf{A}}
\newcommand{\C}{\mathbf{C}}
\renewcommand{\v}{\mathbf{v}}
\newcommand{\q}{\mathbf{q}}
\newcommand{\Q}{\mathbf{Q}}
\newcommand{\dee}{\mathrm{d}}
\newcommand{\id}{\mathbbm{1}}
\newcommand{\ccc}{\chi}
\newcommand{\ppp}{\mathcal{P}}
\title{Twisting Somersault}
\author{Holger R.~Dullin and William Tong}
\address{The University of Sydney, School of Mathematics and Statistics}
\email{holger.dullin@sydney.edu.au, william.tong@sydney.edu.au}
\begin{document}
\maketitle
%\section{}
%\subsection{}

\begin{abstract}
A complete description of twisting somersaults is given using a reduction to a time-dependent
Euler equation for non-rigid body dynamics. The central idea is that after reduction the twisting 
motion is apparent in a body frame, while the somersaulting (rotation about the fixed angular 
momentum vector in space) is recovered by a combination of dynamic and geometric phase. 
In the simplest ``kick-model'' the number of somersaults $m$ and the number of twists $n$
are obtained through a rational rotation number $W = m/n$ of a (rigid) Euler top. This rotation 
number is obtained by a slight modification of Montgomery's formula \cite{Montgomery91} 
for how much the rigid body has rotated. 
Using the full model with shape changes that take a realistic time we then derive the master twisting-somersault formula:
An exact formula that relates the airborne time of the diver, the time spent in various stages of the dive, the numbers $m$ and $n$, the energy in the stages, and the angular momentum by extending a geometric phase formula due to Cabrera \cite{Cabrera07}.
Numerical simulations for various dives agree perfectly with this formula 
where realistic parameters are taken from actual observations.
\end{abstract}

\section{Introduction}

One of the most beautiful Olympic sports is springboard and platform diving, where a typical dive
consists of a number of somersaults and twists performed in a variety of forms. 
The athlete generates angular momentum at take-off and achieves the desired 
dive by executing shape changes while airborne. From a mathematical point of 
view the simpler class of dives are those for which the rotation axis and hence the direction of 
angular velocity remains constant, and only the values of the principal moments of inertia 
are changed by the shape change, but not the principal axis. This is typical in dives with
somersaults in a tight tuck position with minimal moment of inertia. 
The mathematically not so simple dives are those where the shape change 
also moves the principal axis and hence generates a motion in which the 
rotation axis is not constant. This is typical in twisting somersaults.

In this work we present a detailed mathematical theory of the twisting somersault,
starting from the derivation of the equations of motion for coupled rigid bodies in a certain co-moving frame, 
to the derivation of a model of a shape-changing diver who performs a 
twisting somersault. The first correct description of the physics of the twisting somersault 
was given by Frohlich \cite{Frohlich79}. Frohlich writes with regards to some publication 
from the 60s and 70s that
``several books written by or for coaches discuss somersaulting and twisting,
and exhibit varying degrees of insight and/or confusion about the physics processes that occur".
A full fledged analysis has been developed by 
Yeadon in a series of classical papers \cite{Yeadon93a,Yeadon93b,Yeadon93c,Yeadon93d}. 
Frohlich was the first to 
point out the importance of shape change for generating rotations even in 
the absence of angular momentum. What we want to add here is the analysis
of how precisely a shape change can generate a change in rotation in the 
presence of angular momentum. From a modern point of view this is a question 
raised in the seminal paper by Shapere and Wilczek \cite{ShapereWilczek87,ShapereWilczek88}:
``what is the most efficient way for a body to change its orientation?"
Our answer involves generalisations of geometric phase in rigid body dynamics  
\cite{Montgomery91} to shape-changing bodies recently obtained in \cite{Cabrera07}.

To be able to apply these ideas in our context we first derive a version 
of the Euler equation for a shape-changing body. Such equations are 
not new, see, e.g.~\cite{Montgomery93,Iwai98,Iwai99,Enos93}, but what is new is the  
derivation of the explicit form of the additional time-dependent terms when the 
model is a system of coupled rigid bodies. We have chosen to be less general than 
previous authors by specialising in a particular frame, but for our intended 
application the gained beauty and simplicity 
of the equations of motion (see Theorem 1) is worth the loss of generality.
We then take a particularly simple system of just two coupled rigid bodies (the ``one-armed diver'')
and show how a twisting somersault can be achieved even with this overly simplified model. 
An even simpler model is the diver with a rotor \cite{BDDLT15}, in which all the 
stages of the dive can be analytically solved for. 

Throughout the paper we emphasise the geometric mechanics point of view.
Hence the translational and rotational symmetry of the problem 
is reduced, and thus Euler-type equations are found in a co-moving frame.
In this reduced description the amount of somersault (i.e.\ the amount of rotation
about the fixed angular momentum vector in space) is not present. 
Reconstruction allows us to recover this angle by solving an additional differential equation 
driven by the solution of the reduced equations. However, it turns out that 
for a closed loop in shape space the somersault angle can be recovered by 
a geometric phase formula due to \cite{Cabrera07}. In diving terminology 
this means that from the knowledge of tilt and twist the somersault can be 
recovered. We hope that the additional insight gained in this way will lead
to an improvement in technique and performance of the twisting somersault.

The structure of the paper is as follows:
In section 2 we derive the equations of motion for a system of coupled rigid bodies that 
is changing shape. The resulting Euler-type equations are the basis for the following analysis. 
In section 3 we discuss a simplified kick-model, in which the shape change 
is impulsive. The kick changes the trajectory and the energy, but not the total 
angular momentum. In section 4 the full model is analysed, without the kick assumption.
Unlike the previous section, here we have to resort to numerics to compute
some of the terms. Nevertheless, the generalised geometric phase formula 
due to Cabrera \cite{Cabrera07} still holds and gives a beautiful geometric interpretation 
to the mechanics behind the twisting somersault.

% Kane \& Schere \cite{KaneScher69}, 
% Levi \cite{Levi93}
% \cite{LittlejohnReinsch97}
\newpage
\section{Euler equations for coupled rigid bodies}

Let $\l$ be the constant angular momentum vector in a space fixed frame.
Rigid body dynamics usually use a body-frame because in that frame the tensor 
of inertia is constant. 
The change from one coordinate system to the other is given by a rotation matrix $R = R(t) \in SO(3)$,
such that $\l = R \L$.
In the body frame the vector $\L$ is described as a moving vector and only its length
remains constant since $R \in SO(3)$.
The angular velocity $\O$ in the body frame is the vector such that
 $ \O \times \v =  R^t \dot R \v$ for any vector $\v \in \R^3$.
Even though for a system of coupled rigid bodies the tensor of inertia is generally not 
a constant, a body frame still gives the simplest equations of motion:
\begin{theorem}
The equations of motion for a shape-changing body with angular momentum vector $\L \in \R^3$ in a body frame are
\[
         \dot \L = \L \times \O
\]
where the angular velocity $\O \in \R^3$ is given by
\[
   \O = I^{-1} ( \L - \A),
\]
$I = I(t)$ is the tensor of inertia, and $\A = \A(t)$ is a momentum shift generated by the shape change.
\end{theorem}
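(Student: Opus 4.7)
The plan is to combine conservation of angular momentum in the space frame with a careful accounting of the velocities produced by the shape change; there are really two independent steps.

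First I would establish the transport equation $\dot\L = \L\times\O$. In free flight there is no external torque, so $\l$ is constant in space and $\dot\l = 0$. Differentiating the defining relation $\l = R\L$ and using $R^t\dot R\v = \O\times\v$ for every $\v \in \R^3$,
\[
0 = \dot R\,\L + R\,\dot\L = R\bigl(\O\times\L + \dot\L\bigr),
\]
which forces $\dot\L = \L\times\O$. This step is identical to the rigid Euler derivation and is insensitive to the shape change, since the only input is the kinematic identity for $R^t \dot R$.

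Next I would derive the decomposition $\L = I\O + \A$, which is where the shape change enters. Let $\mathbf{x}_i(t)$ denote the body-frame position of the $i$th mass element; for a coupled rigid body system $\mathbf{x}_i(t)$ is determined by the shape parameters through the rigid motion of the sub-body containing $i$ relative to the chosen body frame. The space-frame position is $R\mathbf{x}_i$, with velocity
\[
\frac{\dee}{\dee t}(R\mathbf{x}_i) = \dot R\,\mathbf{x}_i + R\,\dot{\mathbf{x}}_i = R\bigl(\O\times\mathbf{x}_i + \dot{\mathbf{x}}_i\bigr).
\]
Summing $m_i(R\mathbf{x}_i)\times\tfrac{\dee}{\dee t}(R\mathbf{x}_i)$ over $i$ and pulling out the common factor $R$,
\[
\L = R^t\l = \sum_i m_i\,\mathbf{x}_i\times(\O\times\mathbf{x}_i) + \sum_i m_i\,\mathbf{x}_i\times\dot{\mathbf{x}}_i.
\]
Identifying the first sum with $I\O$, where $I = \sum_i m_i\bigl(|\mathbf{x}_i|^2\id - \mathbf{x}_i\mathbf{x}_i^t\bigr)$ is the instantaneous inertia tensor, and the second with the momentum shift $\A$, inverting yields $\O = I^{-1}(\L-\A)$.

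The main obstacle, such as it is, lies in Step 2: one must check that $\A$ is genuinely a function of the shape and its rate of change alone, independent of the overall orientation $R$. For each sub-body the velocities $\dot{\mathbf{x}}_i$ are linear in the shape rates via the articulation joining the sub-bodies to the chosen body frame, so $\A(t) = \A(s(t),\dot s(t))$ with $s$ denoting the shape coordinates. This is precisely what is meant by fixing the gauge through a particular choice of body frame, which the authors have indicated is done for the sake of simplicity. Once this bookkeeping is in place, Theorem 1 follows by combining the transport equation of Step 1 with the decomposition of Step 2.
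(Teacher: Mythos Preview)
Your proposal is correct and follows essentially the same two-step route as the paper: derive the transport equation from $\dot\l = 0$ and the kinematic identity for $R^t\dot R$, then compute $\L$ by summing angular momentum contributions in the body frame and splitting off the $I\O$ and $\A$ terms. The only cosmetic difference is that the paper works with a continuum deformation map $S$ (so that $\q = R S\Q$ and the shape velocity appears as $\dot S S^{-1}\tilde\Q$) rather than discrete mass points $\mathbf{x}_i(t)$, but once one identifies your $\mathbf{x}_i$ with the paper's $\tilde\Q = S\Q$ the computations are line-for-line the same.
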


\begin{proof}
The basic assumption is that the shape change is such that the angular momentum is  constant. 
Let $\l$ be the vector of angular momentum in the space fixed frame, then $ \l = R \L$. 
Taking the time derivative gives $\mathbf{0} = \dot R \L + R \dot \L$ and hence $\dot \L = -R^t \dot R \L = -\O \times \L = \L \times \O$.
The interesting dynamics is all hidden in the relation between $\O$ and $\L$.

Let $\q = R \Q$ where $\Q$ is the position of a point in the body $B$
in the body frame, and $\q$ is the corresponding point in the space fixed frame.
The relation between $\L$ and $\O$ is obtained from the definition of angular momentum which is 
$\q \times \dot \q$ integrated over the body $B$.
The relation $ \q = R \Q$ is for a rigid body, for a deforming body we label each point by $\Q$ 
in the body frame but allow for an additional shape change $S$, so that $\q = R S \Q$. 
We assume that $S : \R^3 \to \R^3$ is volume preserving, which means 
that the determinant of the Jacobian matrix of $S$ is 1.
The deformation $S$ need not be linear, but we assume that we are in a frame in which the 
centre of mass is fixed at the origin, so that $S$ fixes that point. Now
\[
\begin{aligned}
    \dot \q & = \dot R S \Q + R \dot S \Q + R S \dot \Q % = \dot R R^t R S \Q + R \dot S \Q = \o \times R S \Q + R \dot S \Q 
             = R R^t \dot R  S \Q + R \dot S \Q = R ( \O \times S \Q )+ R \dot S S^{-1} S \Q \\
             & = R (\O \times \tilde \Q + \dot S S^{-1} \tilde \Q)
\end{aligned}
\]
% where $\dot R R^t \v = \o \v$ for any $\v \in \R^3$, so that $\o = R \O$ is the angular velocity in the space fixed frame.
where $\tilde \Q = S \Q$.
Thus we have
\[
\begin{aligned}
      \q \times \dot \q & =  R \tilde \Q \times R ( \O \times \tilde \Q + \dot SS^{-1} \tilde \Q) \\
         & = R ( | \tilde \Q|^2 \id -  \tilde \Q \tilde \Q^t) \O + R ( \tilde \Q \times \dot S S^{-1}  \tilde \Q)  \,.
\end{aligned}
\]
Now $\l$ is defined by integrating over the deformed body $\tilde B$ with density $\rho = \rho(\tilde \Q)$, 
so that $\l = \int \rho \q \times \dot \q \dee \tilde \Q$ and using $\l = R \L$ gives 
\[
     \L = \int_{\tilde B} \rho ( | \tilde \Q |{}^2 \id -  \tilde \Q \tilde \Q^t) \, \dee \tilde \Q \,\, \O + \int_{\tilde B} \rho \tilde \Q \times \dot S S^{-1} \tilde \Q \, \dee \tilde \Q.
\]
The first term is the tensor of inertia $I$ of the shape changed body and the constant term defines $\A$ so that
\[
     \L = I \O + \A
\]
as claimed.
\end{proof}

\begin{remark} 
Explicit formulas for $I$ and $\A$ in the case of a system of coupled rigid bodies are given 
in the next theorem.
When $I$ is constant and $\A = 0$  the equations reduce to the classical Euler equations for a rigid body.
\end{remark}

\begin{remark}
For arbitrary time dependence of $I$ and $\A$ the total angular momentum $| \L |$ is conserved, 
in fact it is a Casimir of the Poisson structure $\{ f, g \} = \nabla f \cdot \L \times \nabla g$.
\end{remark}

\begin{remark}
The equations are Hamiltonian with respect to this Poisson structure 
with Hamiltonian $H = \frac12 ( \L - \A) I^{-1} ( \L - \A)$ such that $\O = \partial H/ \partial \L$.
\end{remark}

% Remark 1.4: Mechanical connection...

For a system of coupled rigid bodies the shape change $S$ is given by rotations of the individual segments 
relative to some reference segment, typically the trunk. The orientation of the reference segment is given by the rotation matrix $R$
so that $\l = R \L$. The system of rigid bodies is described by a tree that describes the 
connectivity of the bodies, see \cite{Tong15} for the details.

Denote by $\C$ the overall centre of mass, and by $\C_i$ the position of the centre of mass of body $B_i$ relative 
to $\C$. Each body's mass is denoted by $m_i$, and its orientation by $R_{\alpha_i}$, where 
$\alpha_i$ denotes the set of angles necessary to describe its relative orientation (e.g.\ a single angle 
for a pin joint, or 3 angles for a ball and socket joint). All orientations are measured relative to the reference 
segment, so that the orientation of $B_i$ in the space fixed frame is given by $R R_{\alpha_i}$.
The angular velocity $\O_{\alpha_i}$ is the relative angular velocity corresponding to $R_{\alpha_i}$, 
% obtained from  $  R^t \R_\alpha^t ( \dot R_\alpha R + R_\alpha \dot R) = R^t \hat O_\alpha R + \hat \Omega
so that the angular velocity of $B_i$ in the space fixed frame is $R_\alpha^t \O + \O_\alpha$.
Finally $I_i$ is the tensor of inertia of $B_i$ in a local frame with center at $\C_i$ and coordinate axes aligned with the principle axes of inertia. With this notation we have

\begin{theorem}
For a system of coupled rigid bodies we have
\[
  I = \sum R_{\alpha_i} I_i R_{\alpha_i}^t + m_i (| \C_i |{}^2 \id - \C_i \C_i^t )
\]
and
\[
\A = \sum ( m_i \C_i \times \dot \C_i + R_{\alpha_i} I_i \O_{\alpha_i} ) 
\]
where $m_i$ is the mass, $\C_i$ the position of the centre of mass, 
$R_{\alpha_i}$ the relative orientation, $\O_{\alpha_i}$ the relative angular velocity
such that $R_{\alpha_i}^t \dot R_{\alpha_i} \v = \O_{\alpha_i} \v$ for all $\v \in \R^3$, 
and $I_i$ the tensor of inertia of body $B_i$.
The sum is over all bodies $B_i$ including the reference segment, for which the rotation is simply given by $\id$. % $R_{\alpha_i} = \id$.
\end{theorem}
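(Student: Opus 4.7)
The plan is to specialise the integral expressions for $I$ and $\A$ derived in the proof of Theorem 1 to the case where the deformed body $\tilde B$ is a disjoint union of rigid segments $B_i$. The integral $\int_{\tilde B}$ then splits as $\sum_i \int_{B_i}$, and on each segment I parametrise a material point by a local vector $Y$ attached to $B_i$, writing
\[
    \tilde Q = \C_i + R_{\alpha_i} Y, \qquad \dot{\tilde Q} = \dot\C_i + \dot R_{\alpha_i} Y.
\]
The identification $\dot S S^{-1}\tilde Q = \dot{\tilde Q}$ on $B_i$ is the key conceptual step: because $S$ acts as a rigid motion on each segment, the shape-velocity field restricted to $B_i$ is exactly the rigid-body velocity generated by $\dot \C_i$ and $\dot R_{\alpha_i}$. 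The central technical simplification throughout is that $Y$ is measured from the centre of mass of $B_i$, so $\int_{B_i}\rho_i\, Y\, \dee Y = 0$ and every term linear in $Y$ drops out after integration.

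For the inertia tensor I would expand $|\tilde Q|^2\id - \tilde Q\tilde Q^t$ into four pieces. The piece depending only on $\C_i$ integrates to $m_i(|\C_i|^2\id - \C_i\C_i^t)$, the two cross pieces vanish by the centre-of-mass condition, and the piece depending only on $Y$, using $|R_{\alpha_i}Y|=|Y|$ and $R_{\alpha_i} \id R_{\alpha_i}^t = \id$, factors as
\[
R_{\alpha_i}\Bigl[\int_{B_i}\rho_i(|Y|^2 \id - YY^t)\,\dee Y\Bigr]R_{\alpha_i}^t = R_{\alpha_i} I_i R_{\alpha_i}^t.
\]
Summing over $i$ reproduces the stated formula; this is essentially the parallel-axis theorem in disguise.

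For the momentum shift I would expand $\tilde Q \times \dot{\tilde Q}$ the same way. The $\C_i$-only piece contributes $m_i \C_i \times \dot \C_i$, the linear-in-$Y$ pieces vanish, and the remaining piece is $R_{\alpha_i} Y \times \dot R_{\alpha_i} Y$. Using the definition $R_{\alpha_i}^t \dot R_{\alpha_i} Y = \O_{\alpha_i} \times Y$ together with the identity $R(a\times b) = Ra \times Rb$ valid for $R\in SO(3)$, this becomes $R_{\alpha_i}\bigl(Y \times (\O_{\alpha_i} \times Y)\bigr)$. The vector triple-product identity turns the integrand into $(|Y|^2 \id - YY^t)\O_{\alpha_i}$, whose $\rho_i$-integral over $B_i$ is precisely $I_i \O_{\alpha_i}$, producing $R_{\alpha_i} I_i \O_{\alpha_i}$ per segment.

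The proof is essentially bookkeeping and I do not anticipate a genuine obstacle; the only point deserving care is the identification of $\dot S S^{-1}\tilde Q$ on $B_i$ with the rigid-body velocity $\dot\C_i + \dot R_{\alpha_i} Y$, which is what makes the relative angular velocity $\O_{\alpha_i}$ appear cleanly in $\A$. Everything else reduces to the parallel-axis decomposition for $I$ and one application of the vector triple product for $\A$.
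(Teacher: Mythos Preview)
Your proposal is correct and follows essentially the same approach as the paper: the paper's own proof merely states the transformation law for a point in body $B_i$, instructs the reader to ``repeat the calculation in the proof of Theorem~1 with this particular $S$ and sum over the bodies,'' and defers all further detail to \cite{Tong15}. You have simply carried out that computation explicitly, and your use of the centre-of-mass condition $\int_{B_i}\rho_i\,Y\,\dee Y=0$ together with the triple-product identity is exactly what the paper's sketch is pointing at.
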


\begin{proof}
The basic transformation law for body $B_i$ in the tree of coupled rigid bodies 
is $\q_i = R R_{\alpha_i} ( \C_ i + \Q_i)$.
Repeating the calculation in the proof of Theorem 1 with this particular $S$ and summing over the bodies gives the result. 
We will skip the derivation of $\C_i$ in terms of the shape change and the geometry of the
model and refer to the Thesis of William Tong \cite{Tong15} for the details.
\end{proof}

\begin{remark}
In the formula for $I$ the first term is the moment of inertia of the segment transformed
to the frame of the reference segment, while the second term comes from the parallel axis theorem
applied to the centre of mass of the segment relative to the overall centre of mass.
\end{remark}

\begin{remark}
In the formula for $\A$ the first term is the internal angular momentum generated by 
the change of the relative centre of mass, while the second term originates from the relative angular velocity. 
\end{remark}

\begin{remark}
When there is no shape change then $\dot \C_i = 0$ and $\O_i = 0$, hence $\A = 0$.
\end{remark}

\begin{remark}
The vectors $\C_i$,  $\dot \C_i$, and $\O_{\alpha_i}$ are determined by the set of 
time-dependent matrices $\{ R_{\alpha_i} \}$  (the time-dependent ``shape'') and the joint positions of the coupled rigid bodies 
(the time-independent ``geometry'' of the model), see \cite{Tong15} for the details.
In particular also $\sum m_i \C_i = 0$.
\end{remark}

%Remark 2.4: Simple cases: 
%Helicopter: $\Omega_i \not = 0$, counterrotation.
%`Icescater": minimize the contribution from the parallel axis theorem.

% \footnote{notation: questions for William: why $J_i$ instead of $I_i$? Also, why not $R_i$ instead of $R_{\alpha_i}$, I guess $R_i$ is hardly used...}

\section{A simple model for twisting somersault}

\begin{figure}
\centering
\subfloat[$t=0$ ]{\includegraphics[width=4cm]{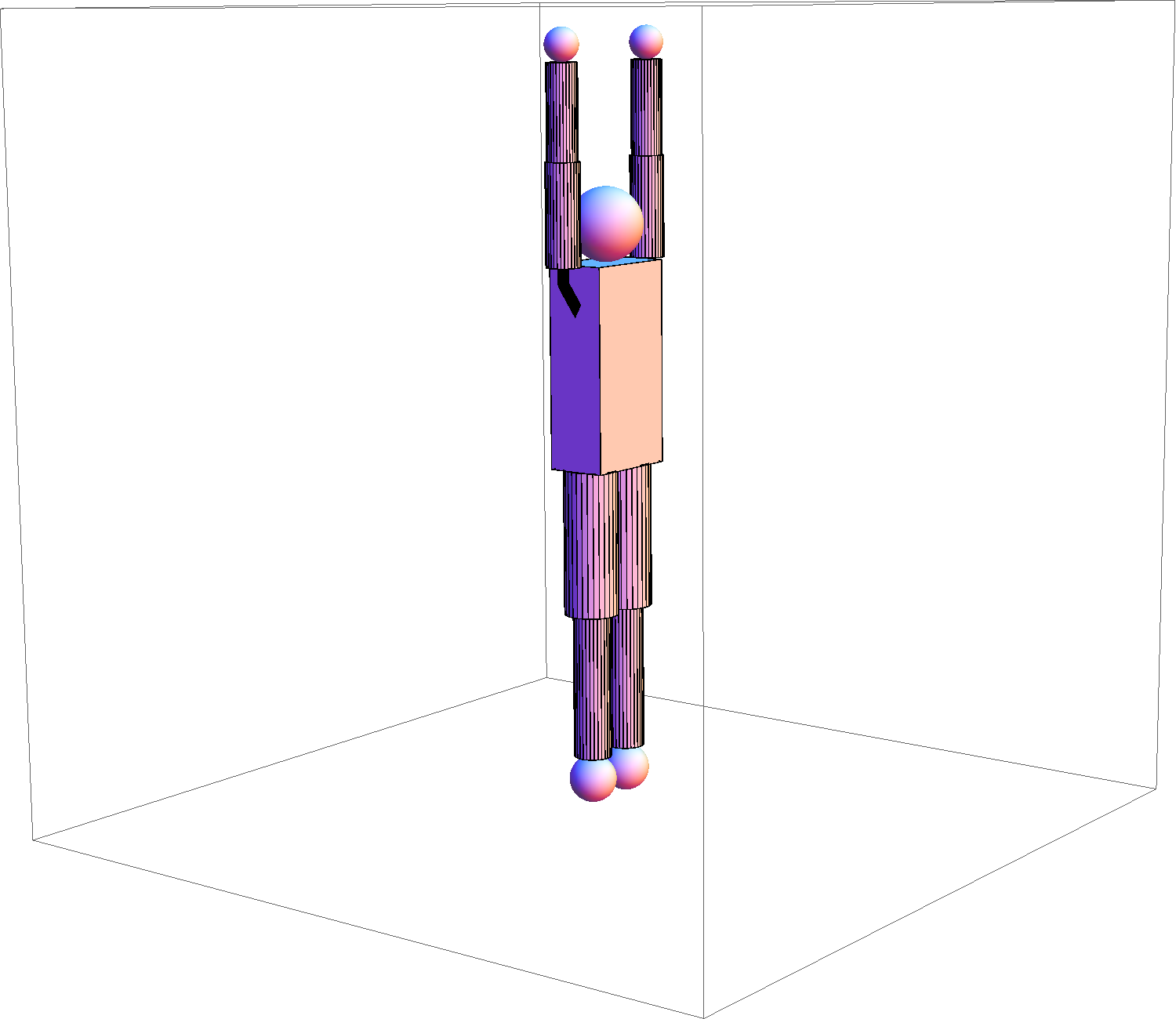}\label{fig:animationlayout}}
\subfloat[$t=1/32$ ]{\includegraphics[width=4cm]{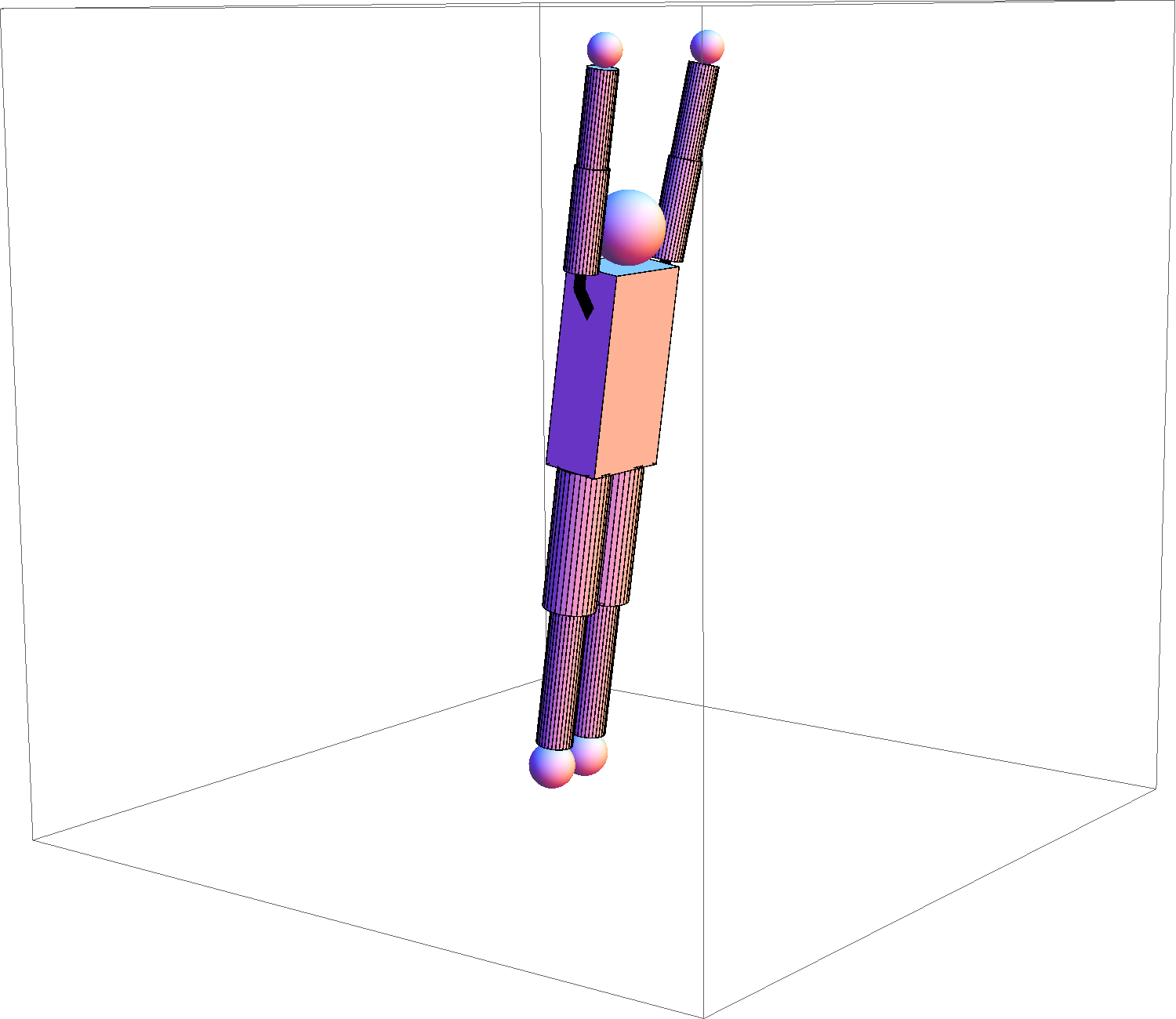}}
\subfloat[$t=1/16$ ]{\includegraphics[width=4cm]{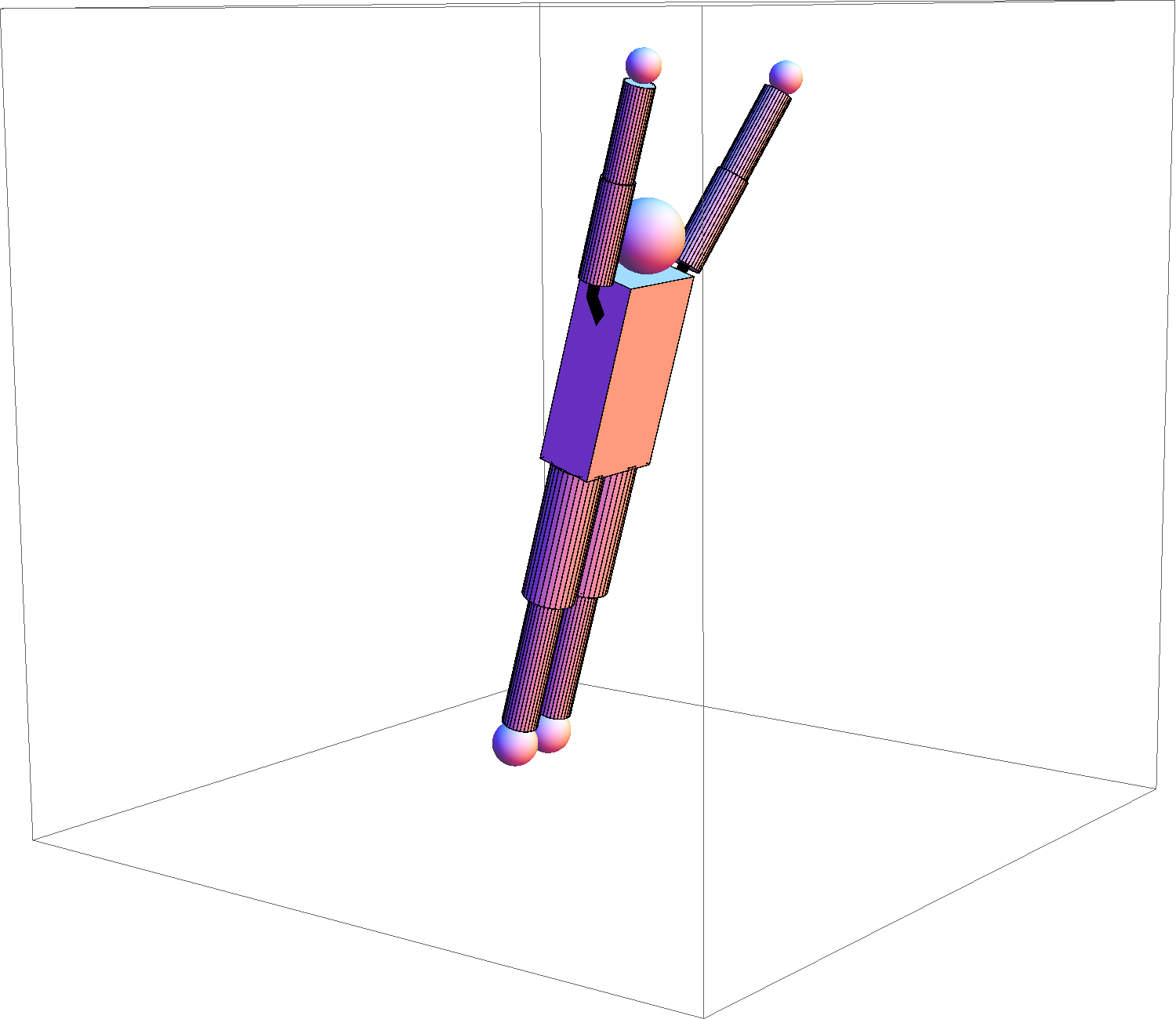}}\\
\subfloat[$t=3/32$ ]{\includegraphics[width=4cm]{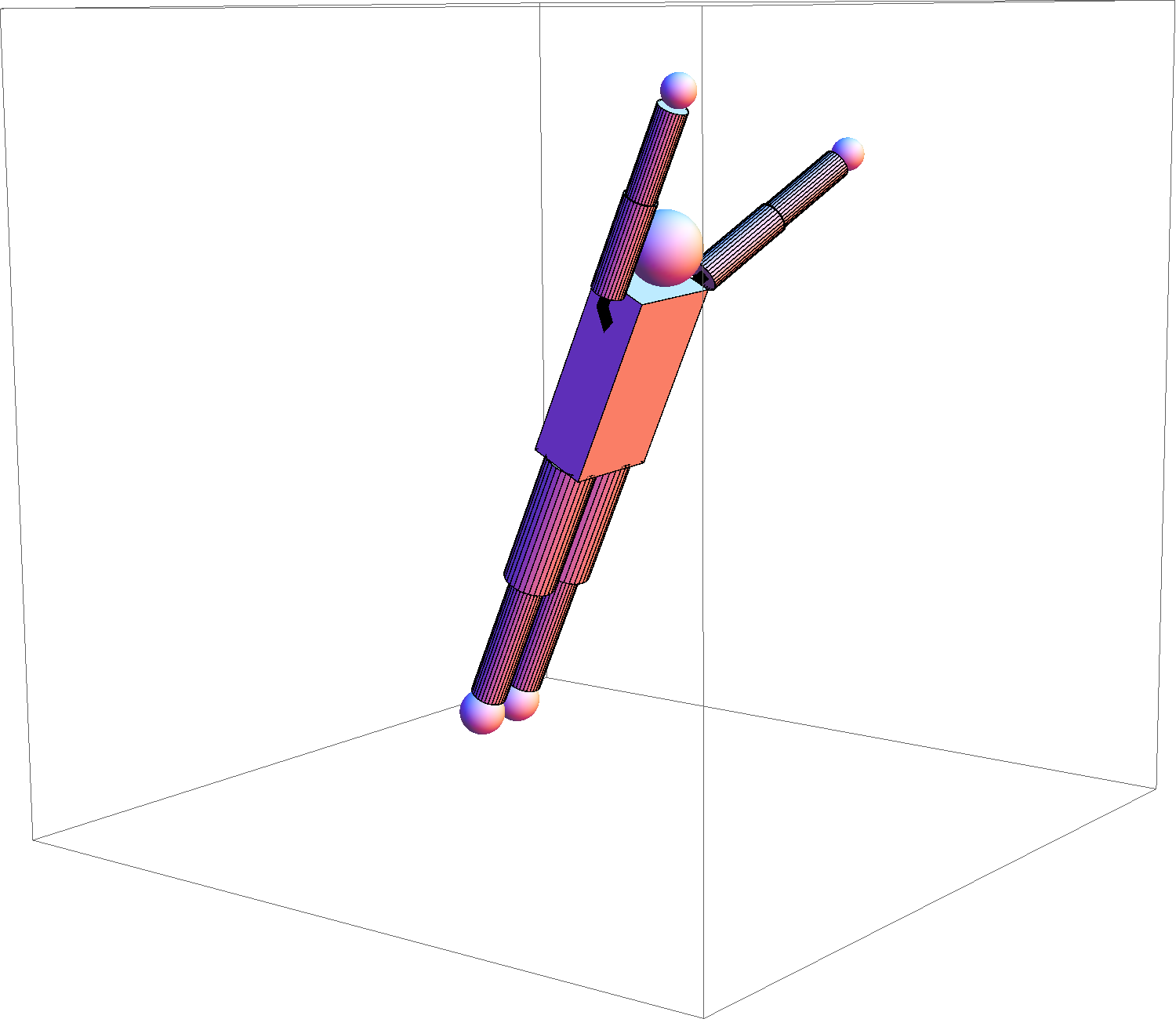}}
\subfloat[$t=1/8$ ]{\includegraphics[width=4cm]{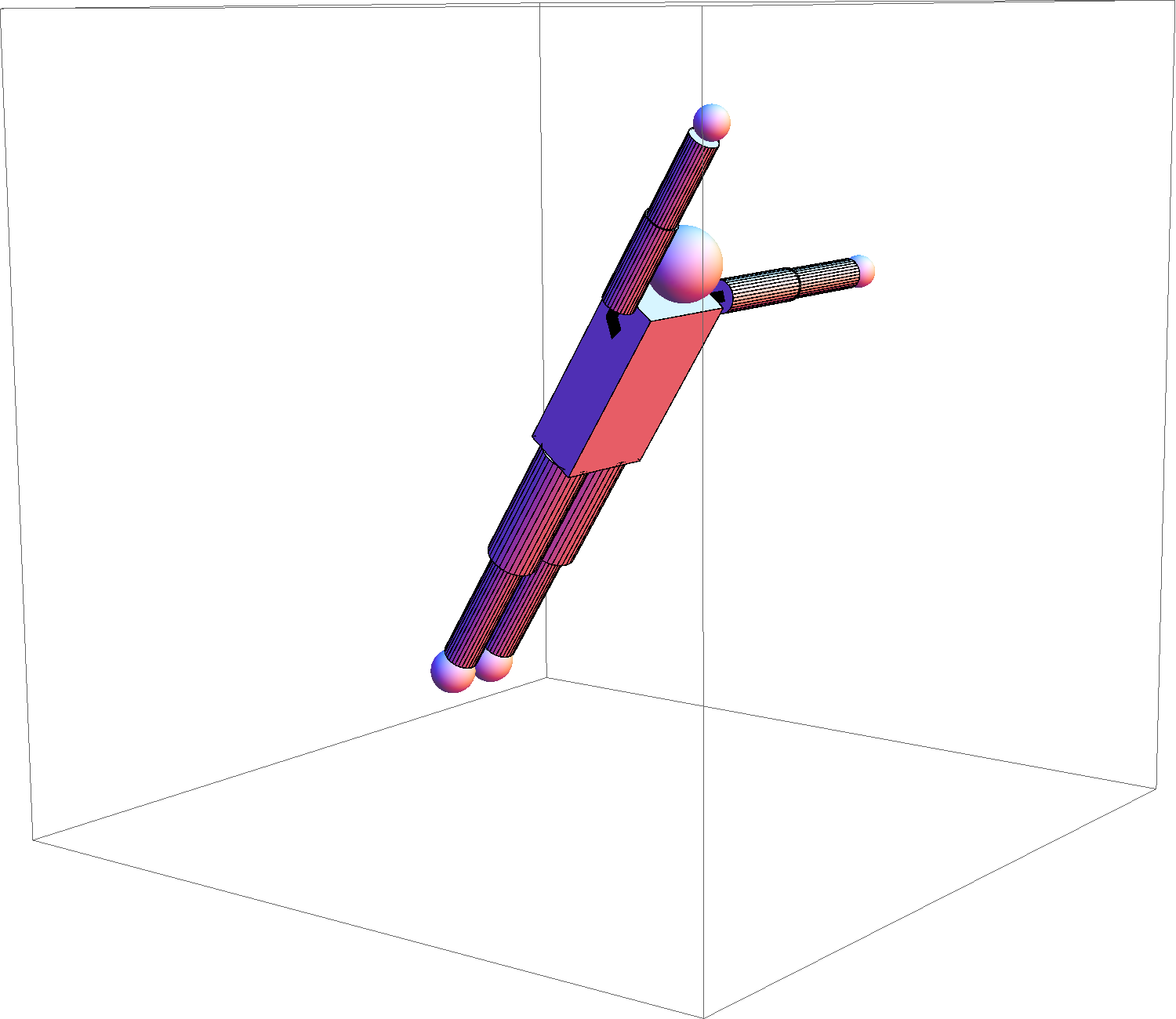}}
\subfloat[$t=5/32$ ]{\includegraphics[width=4cm]{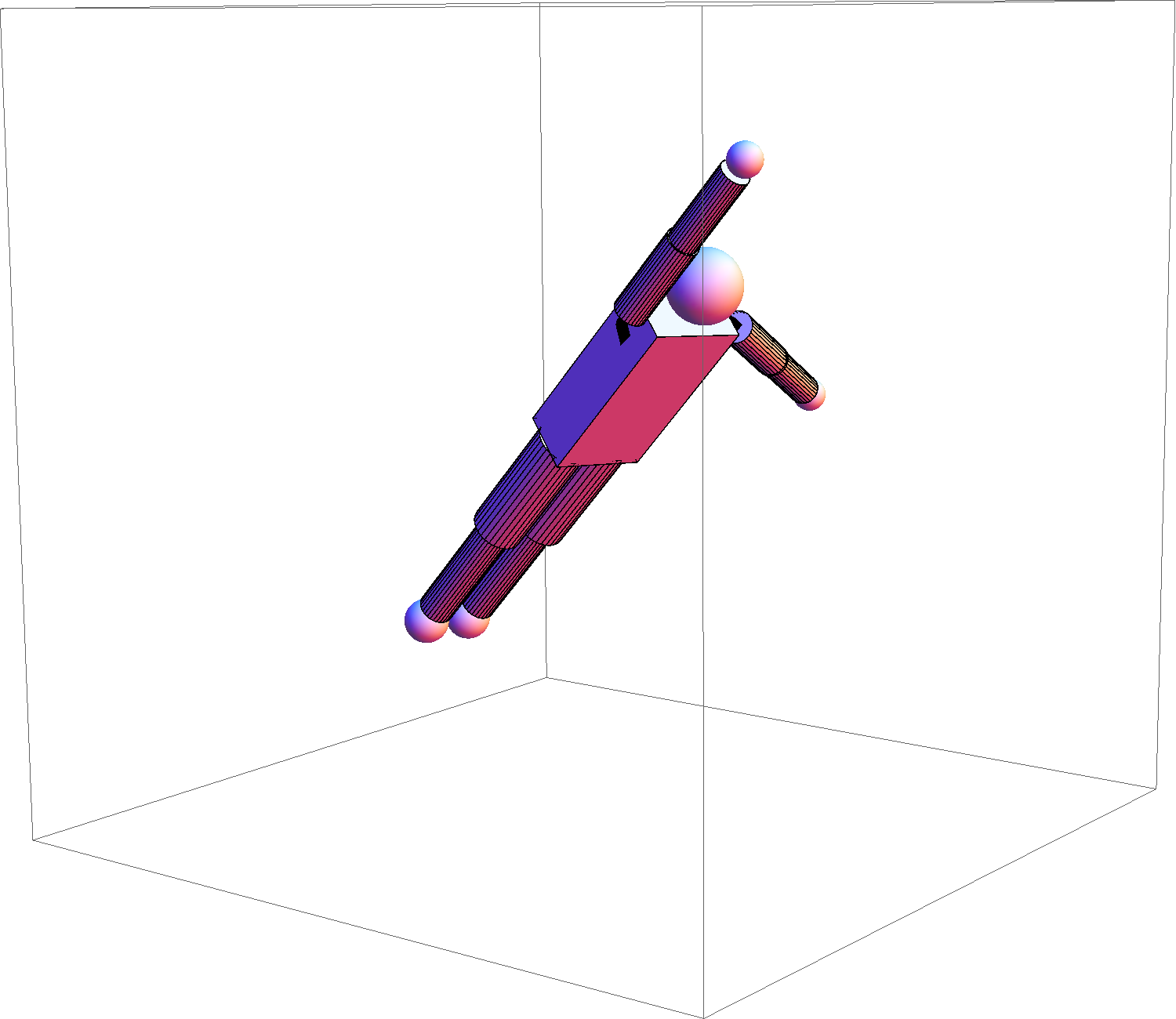}}\\
\subfloat[$t=3/16$ ]{\includegraphics[width=4cm]{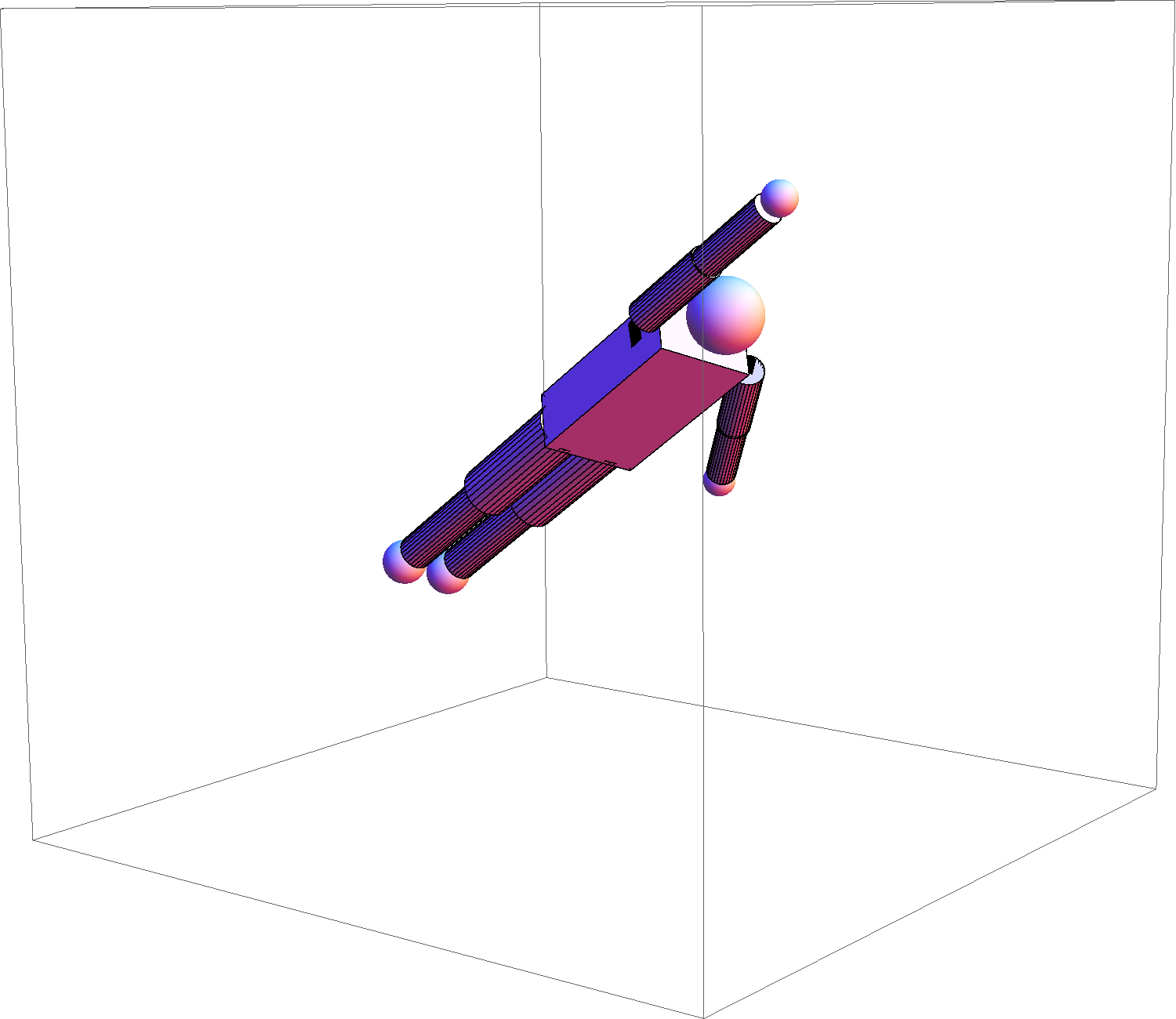}}
\subfloat[$t=7/32$ ]{\includegraphics[width=4cm]{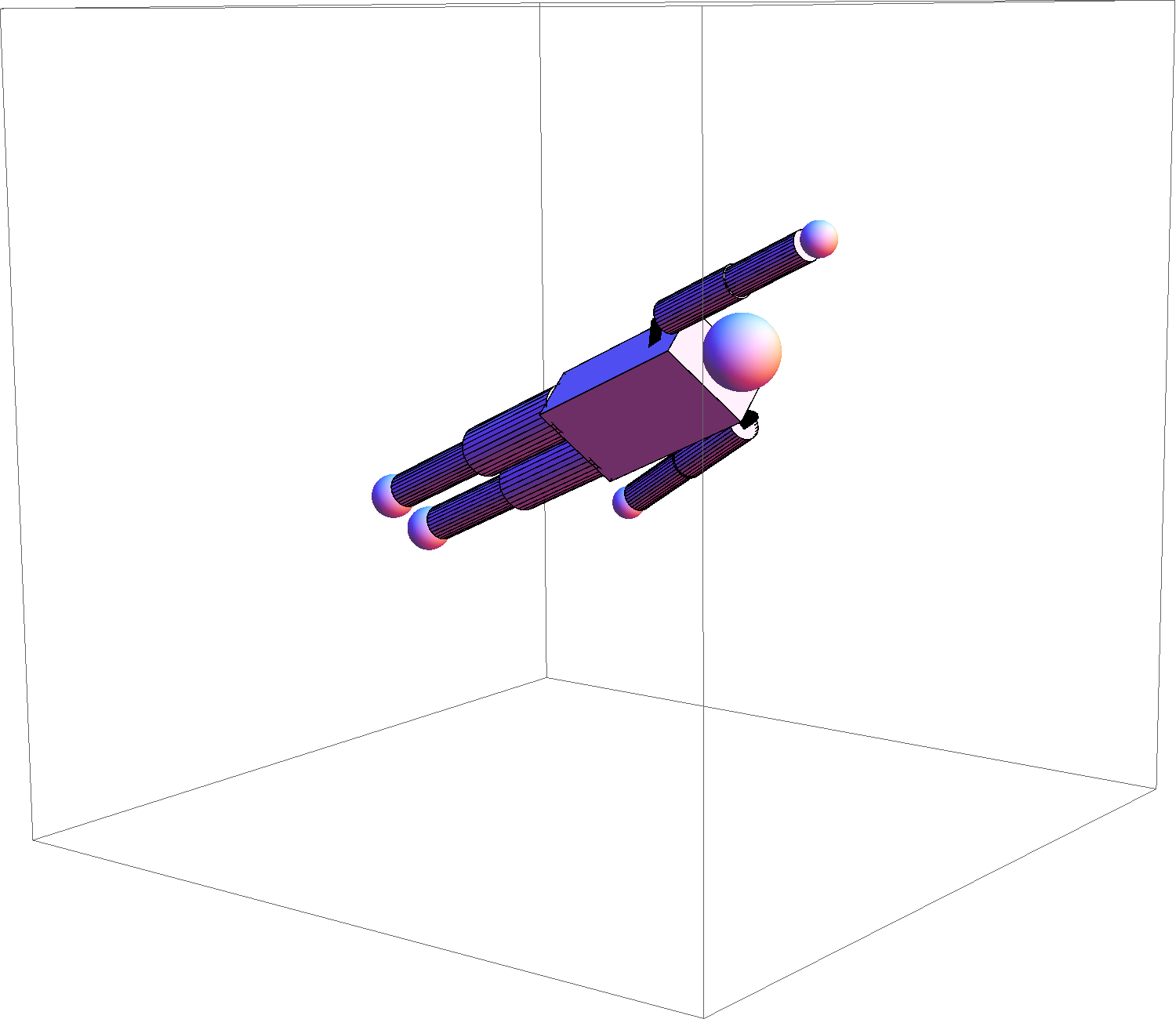}}
\subfloat[$t=1/4$ ]{\includegraphics[width=4cm]{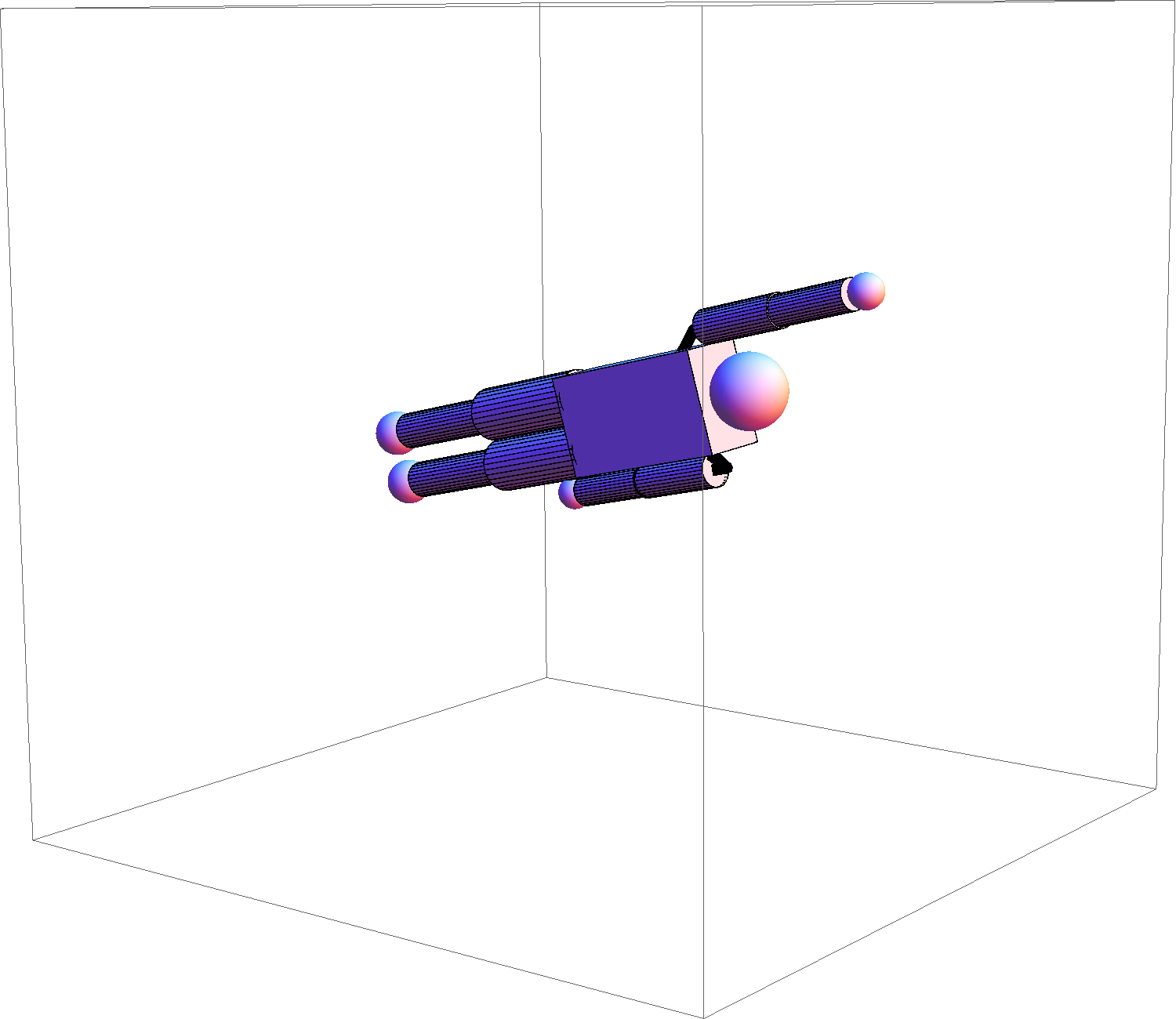}\label{fig:animationi}}
\caption{The arm motion for the twisting somersault.}\label{fig:animation2}
\end{figure}

Instead of the full complexity of a realistic coupled rigid body model for the human body,  
e.g., with 11 segments \cite{Yeadon90b} or more, % \footnote{check and/or correct...} 
here we are going to show that even when all but
one arm is kept fixed it is still possible to do a twisting somersault. 
The formulas we derive are completely general, so that more complicated shape 
changes can be studied in the same framework. But in order to explain the 
essential ingredients of the twisting somersault 
we chose to discuss a simple example.
A typical dive consists of a number of phases or stages in which the body shape is either fixed or not.
Again, it is not necessary to make this distinction, the equations of motion are general and one could
study dives where the shape is changing throughout. However, the assumption of  
rigid body motions for certain times is satisfied to a good approximation in reality and makes the 
analysis simpler and more explicit.
The stages where shape change occurs are relatively short, and considerable 
time is spent in rotation with a fixed shape. This observation motivates our first approximate model in 
which the shape changes are assumed to be impulsive. Hence we have 
instantaneous transitions between solution curves of rigid bodies with different tensors of inertia and 
different energy, but the same angular momentum. 
A simple twisting somersault hence looks like this: 
The motion starts out as a 
%pure somersault (stage 1), which is steady rotation about a principal axis, typically about the axis of the middle principle moment of inertia, which is unstable.
steady rotation about a principal axis resulting in pure somersault (stage 1), and typically this is about the axis of the middle principle moment of inertia which has unstable equilibrium.
After some time a shape change occurs; in our case one arm goes down (stage 2). 
This makes the body asymmetric and generates some tilt 
between the new principal axis and the constant angular momentum vector. 
As a result the body starts twisting
with constant shape (stage 3) until another shape change (stage 4) stops the twist, for which
the body then resumes pure somersaulting motion (stage 5) until head first entry in the water.
The amount of time spent in each of the five stages is denoted by $\tau_i$, where $i = 1, \dots, 5$.

\begin{figure}
\centering
\subfloat[$\L$ for all stages.]{\includegraphics[width=7.25cm]{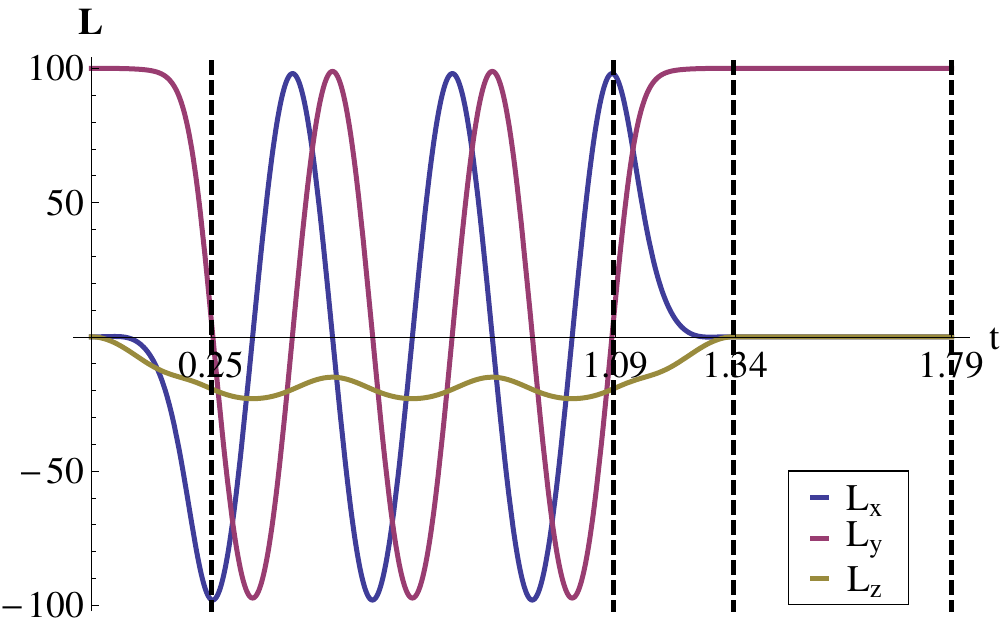}\label{fig:Lfull}}
\subfloat[$q$ for all stages.]{\includegraphics[width=7.25cm]{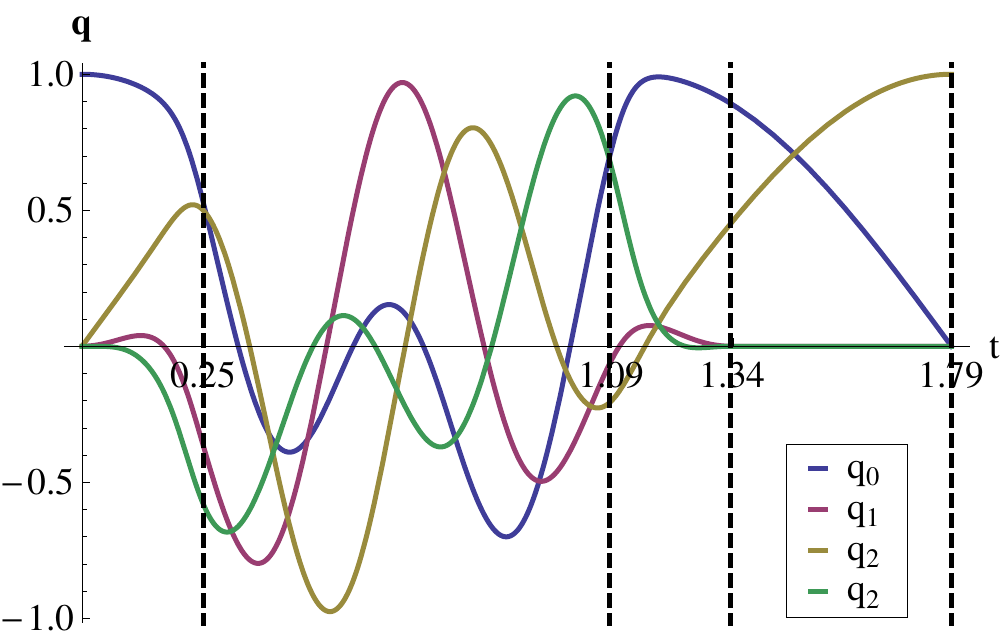}}
\caption{Twisting somersault with $m=1.5$ somersaults and $n=3$ twists.
The left pane shows the angular momentum $\L(t)$,
and the right pane the quaternion $q(t)$ that determines the orientation $R$.
The stages are separated by the vertical dashed lines, $\tau_1 = 0$, $\tau_2 = \tau_4 = 1/4$.
The same trajectory on the $\L$-sphere is shown in Fig.~\ref{fig:spacefull}.}\label{fig:Lqfull}
\end{figure}

\begin{figure}
\centering
\includegraphics[width=12cm]{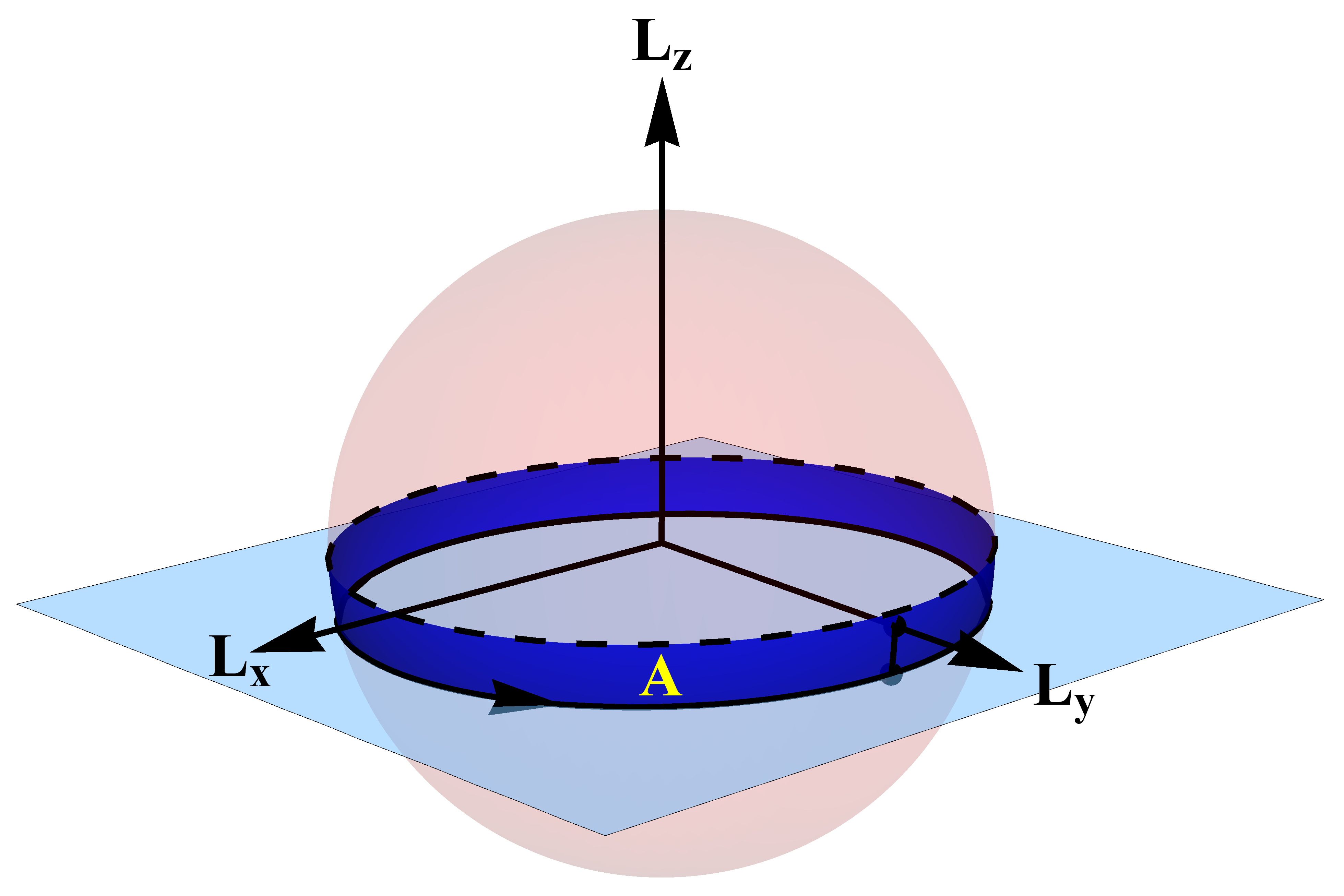}
\caption{Twisting somersault on the sphere $|\L| = l$ where the shape change is a kick.
The region $A$ bounded by the stage 3 orbit of $\L$ and the equator (dashed) is shaded in dark blue. }
\label{fig:Asimple}
\end{figure}

The energy for pure somersault in stage 1 and stage 5 is $E_s = \frac12 \L_s I_s^{-1} \L_s = \frac12 l^2 I_{s,yy}$.
In the kick-model stage 2 and stage 4 do not take up any time so $\tau_2 = \tau_4 = 0$,
but they do change the energy and the tensor of inertia. On the momentum sphere $|\L|^2 = l^2$
the dive thus appears like this, see Fig.~\ref{fig:Asimple}: For some time the orbit remains at the equilibrium point 
$L_x = L_z = 0$, then it is kicked into a periodic orbit with larger energy describing the 
twisting motion. Depending on the total available time a number of full (or half) revolutions are
done on this orbit, until another kick brings the solution back to the unstable equilibrium point 
where it started (or on the opposite end with negative $L_y$). Finally, some time is spent in the pure somersaulting motion before the athlete completes the dive with head first entry into the water. 

The description in the body frame on the sphere $|\L|^2 = l^2$ does not directly contain
the somersault rotation about $\l$ in physical space. 
This is measured by the angle of rotation $\phi$ about the fixed angular momentum vector $\l$ in space, 
which is the angle reduced by the symmetry reduction that lead to the description in the body frame, and the angle $\phi$ will have to be recovered from its dynamic and geometric phase. What is visible on the $\L$-sphere is the dynamics of the twisting motion, 
which is the rotation about the (approximately) head-to-toe body axis. 
All motions on the $\L$-sphere except for the separatrices are periodic, 
but they are not periodic in physical space
because $\phi$ in general has a different period.
This is the typical situation in a system after symmetry reduction.

To build a successful jump the orbit has to start and finish at one of the equilibrium points
that correspond to somersault rotation about the fixed momentum axis $\l$ in space. 
This automatically ensures that the number of twists performed will be a half-integer or integer.
In addition, the change in $\phi$, i.e.\ the amount of rotation about the axis $\l$ has to be such 
that it is a half-integer, corresponding to take-off in upright position and entry 
with head first into the water. If necessary the angle $\phi$ will evolve (without generating additional twist) in the pure somersaulting stages 1 and 5 to meet this condition.

The orbit for the twisting stage 3 is obtained as the intersection of the angular momentum sphere 
$|\L|^2 = l^2$ and the Hamiltonian $H = \frac12 \L I_t^{-1} \L = E_t$, where  $I_t$ denotes the 
tensor of inertia for the twisting stage 3 which in general is non-diagonal. The period of this motion depends
on $E_t$, and can be computed in terms of complete elliptic integrals of the first kind, see below.

\begin{lemma}
In the kick-model the instantaneous shape change of the arm from up ($\alpha = \pi)$ to down ($\alpha = 0$)
rotates the angular momentum vector in the body frame from 
$\L_s = (0, l, 0)^t$ to $\L_t = R_x( -\ccc) \L_s$ where 
the tilt angle is given by
\[
    \ccc = \int_0^\pi I_{t,xx}^{-1}(\alpha) A_x(\alpha) \dee \alpha  \, ,
\]
$I^{-1}_{t,xx}(\alpha)$ is the $xx$ component of the (changing) moment of inertia tensor $I_t^{-1}$, 
and the internal angular momentum is $\A = ( A_x(\alpha) \dot \alpha, 0, 0)^t$.
In particular the energy after the kick is $E_t = \frac12 \L_s R_x(\ccc) I_t R_x(-\ccc) \L_s$.
\end{lemma}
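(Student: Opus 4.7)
The plan is to apply Theorem~1 on the short time interval of the kick, change the independent variable from $t$ to the monotone shape coordinate $\alpha$, and then let the kick become impulsive. Starting from
\[
\dot \L = \L \times \O = \L \times I^{-1}\L - \L \times I^{-1}\A,
\]
I would divide by $\dot\alpha$ to obtain
\[
\frac{\dee \L}{\dee \alpha} = \frac{1}{\dot\alpha}\,\L \times I^{-1}\L - \L \times I^{-1}\Bigl(\frac{\A}{\dot\alpha}\Bigr).
\]
The hypothesis $\A = (A_x(\alpha)\dot\alpha, 0, 0)^t$ is arranged precisely so that $\A/\dot\alpha = (A_x(\alpha), 0, 0)^t$ is a finite, speed-independent function of $\alpha$ alone, whereas the drift term $\L \times I^{-1}\L$ explicitly carries a factor $1/\dot\alpha$. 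The kick limit is $\dot\alpha \to \infty$ uniformly on $[0,\pi]$, under which the drift drops out and a purely $\alpha$-driven limiting equation survives.

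Next I would recognise this limiting equation
\[
\frac{\dee \L}{\dee \alpha} = -\L \times I^{-1}(\alpha)(A_x(\alpha), 0, 0)^t
\]
as the infinitesimal generator of a rotation of $\L$ about the body-frame $x$-axis. The geometric input I rely on is that the chosen single-arm shape change preserves the diver's sagittal symmetry, keeping $\hat x$ a principal axis of $I(\alpha)$ for every intermediate $\alpha$, so that $I^{-1}(\alpha)(1,0,0)^t = I^{-1}_{t,xx}(\alpha)(1,0,0)^t$. The equation then collapses to
\[
\frac{\dee \L}{\dee \alpha} = -I^{-1}_{t,xx}(\alpha)A_x(\alpha)\,\L \times \hat x,
\]
i.e.\ rotation of $\L$ about $\hat x$ with signed $\alpha$-rate $I^{-1}_{t,xx}(\alpha)A_x(\alpha)$. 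Because $|\L|$ is a Casimir, $\L$ stays on the sphere of radius $l$, and integrating the rate as $\alpha$ sweeps from $\pi$ to $0$ accumulates the total tilt angle $\ccc = \int_0^\pi I^{-1}_{t,xx}(\alpha)A_x(\alpha)\,\dee \alpha$, yielding $\L_t = R_x(-\ccc)\L_s$ with the sign fixed by the arm-sweep orientation relative to $\hat x$ and the initial condition $\L_s = (0,l,0)^t$.

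For the energy I would simply evaluate the Hamiltonian $H = \tfrac12(\L-\A)I^{-1}(\L-\A)$ at the end of the kick: once the arm has stopped, $\dot\alpha = 0$ forces $\A = 0$, so $E_t = \tfrac12 \L_t I_t^{-1}\L_t$, and substituting $\L_t = R_x(-\ccc)\L_s$ yields the advertised expression. The hard part will be the rigorous justification of the impulsive limit, i.e.\ showing that the discarded drift really is a uniform $O(1/\dot\alpha)$ correction. This reduces to noting that $\L$ stays on the sphere of radius $l$ and that $I(\alpha)$, $A_x(\alpha)$ are smooth on the compact shape interval $[0,\pi]$, so $\L \times I^{-1}\L$ is uniformly bounded and its contribution to $\dee \L/\dee \alpha$ tends to zero uniformly as $\dot\alpha \to \infty$. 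Everything else is a one-parameter family of rotations about $\hat x$.
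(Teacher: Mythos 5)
Your argument is correct and is essentially the paper's proof: both drop the $\L\times I^{-1}\L$ term in the impulsive limit because $\A\propto\dot\alpha$ dominates, both use that $I_{xy}=I_{xz}=0$ for this arm motion so that $I^{-1}\A$ stays parallel to $\hat x$ and the surviving flow is a rotation $R_x$ by the integrated rate $\ccc=\int_0^\pi I^{-1}_{t,xx}A_x\,\dee\alpha$, and both obtain $E_t$ by evaluating the Hamiltonian at $\L_t$. Your reparametrisation by $\alpha$ with the explicit uniform $O(1/\dot\alpha)$ bound on the discarded drift is just a slightly more careful packaging of the paper's ``approximate $\O=-I^{-1}\A$'' step (and your $I_t^{-1}$ in the energy formula is the correct reading of the Hamiltonian).
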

\begin{proof}
Denote by $\alpha$ the angle of the arm relative to the trunk, where $\alpha = 0$ is arm down 
and $\alpha = \pi$ is arm up. Let the shape change be determined by $\alpha(t)$ where $\alpha(0) = \pi$ 
and $\alpha(\tau_2) = 0$. Now $\A$ is proportional to $\dot \alpha$ and hence diverges 
when the time that it takes to do the kick goes to zero, $\tau_2 \to 0$.
Thus we approximate $\O = -I^{-1} \A$, and hence the equations 
of motion for the kick becomes
\[
   \dot \L = I^{-1} \A \times \L
\]
and are linear in $\L$. 
Denote the moving arm as the body $B_2$ with index 2, and the trunk and all the other fixed segments as a combined body with index 1.
Since the arm is moved in the $yz$-plane we have $R_{\alpha_2} = R_x(\alpha(t))$ and  $\Omega_{\alpha_2}$ 
parallel to the $x$-axis. Moreover the overall centre of mass
will be in the $yz$-plane, so that $\C_i$ and $\dot \C_i$, $i=1,2$, are also in the $yz$-plane. So we have
$\C_i \times \dot \C_i$ parallel to the $x$-axis as well, and hence $\A = (A_x \dot\alpha, 0, 0)^t$.
The parallel axis theorem gives non-zero off-diagonal entries only in the $yz$-component of $I$,  
and similarly for $R_{\alpha_2} I_s R^t_{\alpha_2}$; hence $I_{xy} = I_{xz} = 0$ for this shape change.
Thus $I^{-1} \A = ( I_{t,xx}^{-1} A_x \dot \alpha, 0, 0)^t$, 
and the equation for $\dot \L$ can be written as $\dot \L = f(t) M \L$ where $M$ is a constant matrix 
given by $M = \frac{\dee}{\dee t} R_x(t)|_{t=0}$ and $f(t) =  I_{t,xx}^{-1}(\alpha(t)) A_x(\alpha(t)) \dot \alpha$.
Since $M$ is constant we can solve the time-dependent linear equation and find 
\[
   \L_t = R_x( -\ccc) \L_s
\]
(the subscript $t$ stands for ``twist", not for ``time") where 
\[
    \ccc = \int_0^{\tau_2} I_{t,xx}^{-1} A_x \dot \alpha \,\dee t = \int_0^\pi I_{t,xx}^{-1}(\alpha) A_x(\alpha)\, \dee \alpha  \,.
\]
We take the limit $\tau_2 \to 0$ and obtain the effect of the kick, which is a change in $\L_s$ by a rotation 
of $-\ccc$ about the $x$-axis. The larger the value of $\ccc$ the higher the twisting orbit is on the sphere,
and thus the shorter the period. 
The energy after the kick is easily found
by evaluating the Hamiltonian at the new point $\L_t$ on the sphere with the new tensor of inertia $I_t$.
%For this shape change the tensor of inertia does not change by much, so the above reasoning is correct.
% \footnote{more details here?}
\end{proof}

The tensor of inertia after the shape change is denoted by $I_t$, it does not change by much in comparison to $I_s$ but is now non-diagonal, however, with the rotation $R_x(\ppp)$ % see (5.62)
it can be re-diagonalised to 
\[
   J = \mathrm{diag}( J_x, J_y, J_z) = R_x(-\ppp) I_t R_x(\ppp)
\]
where in general the eigenvalues are distinct. The precise formula for $\ppp$ depends on the inertia properties
of the model, but the value for a realistic model with 10 segments it can be found in \cite{Tong15}.
Formulas for $A_x(\alpha)$ in terms of realistic inertial parameters are also given in \cite{Tong15}.

In stage 3 the twisting somersault occurs, where we assume an integer number of twists occur. 
%\footnote{it would be interesting to study in more details the case with half-integer twists, because then 
%the shape change in stage 4 would have to come down to the $\L_2$ axis with opposite sign of 
%$L_2$, which means that this cannot be achieved by the arm up motion, but presumably by 
%other-arm-down. This should be visible in twisting somersaults which are have a half-integer 
%number of twists. However, then the initial and final shape are different, so geometric phase 
%formulas do not apply.}
This corresponds to an integer number of revolutions of the periodic orbit of the rigid body with 
energy $E_t$ and tensor of inertia $I_t$. Let $T_t$ be the period of this motion.
As already pointed out the amount of rotation about the fixed angular momentum vector $\l$ cannot 
be directly seen on the $\L$-sphere. Denote this angle of rotation by $\phi$. We need the total change  $\Delta \phi$ 
to be an odd multiple of $\pi$ for head-first entry. 
Following Montgomery \cite{Montgomery91} the amount $\Delta \phi$ can be split into a 
dynamic and a geometric phase, where the geometric phase is given by the solid angle $S$
enclosed by the curve on the $\L$-sphere.
We are going to re-derive the formula for $\Delta \phi$ here using simple 
properties of the integrable Euler top. 
The formula for the solid angle $S$ enclosed by a periodic orbit on the $\L$-sphere 
is given in the next Lemma (without proof, see, e.g.~\cite{Montgomery91,Levi93,Cushman05}).
\begin{lemma} \label{lem:Ssphere}
The solid angle on the sphere $\L^2 = l^2$ enclosed by the intersection with the ellipsoid $\L J^{-1} \L = 2 E$ is given by 
\[
   S(h, \rho) =  \frac{4 h  g}{\pi}  \Big( \Pi(n,m) -  K(m)\Big) 
\]
where 
$m = \rho ( 1 - 2 h \rho)/( 2 h + \rho)$, 
$n = 1 - 2 h \rho$, 
$g = ( 1 + 2 h /\rho)^{-1/2}$, 
$h = ( 2 E_t J_y/l^2 - 1)/ \mu$,
$\rho = (1 - J_y/J_z)/(J_y/J_x - 1)$, 
$\mu = (J_y/J_x - 1)(1 - J_y/J_z)$.
\end{lemma}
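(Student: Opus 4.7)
The plan is to compute $S$ as the area of the spherical region bounded by the periodic orbit $\gamma$ of the Euler top with inertia $J$ on the level set $\{|\L|^2 = l^2\} \cap \{\L J^{-1}\L = 2E_t\}$. Since this orbit encircles the middle-axis equilibrium $\L = (0,\pm l,0)^t$, I take the $y$-axis as polar axis, so that $\cos\theta = L_y/l$ and $\phi = \arctan(L_z/L_x)$. Applying Stokes' theorem to the 1-form $\alpha = (1 - \cos\theta)\,d\phi$, whose exterior derivative is the round area form, the solid angle becomes
\[
    S \;=\; \oint_\gamma \Big(1 - \frac{L_y}{l}\Big) \frac{L_x\, dL_z - L_z\, dL_x}{L_x^2 + L_z^2}.
\]

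Next I would parametrise $\gamma$ using the classical Jacobi-elliptic solution of the Euler equations $\dot \L = \L \times J^{-1}\L$, exploiting the two conserved quantities. Writing $L_x$, $L_y$, $L_z$ as appropriate multiples of $\mathrm{cn}(\lambda t,k)$, $\mathrm{dn}(\lambda t,k)$, $\mathrm{sn}(\lambda t,k)$, with $\lambda$ and modulus $k^2$ determined by $E_t$, $l$ and the $J_i$, a direct algebraic check identifies $k^2$ with the quantity $m = \rho(1 - 2h\rho)/(2h+\rho)$ after rewriting the inertia data through $\rho$, $\mu$ and the scaled energy $h = (2E_t J_y/l^2 - 1)/\mu$. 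The numerator $L_x\, dL_z - L_z\, dL_x$ then collapses to a polynomial in the elliptic functions via the equations of motion, while the denominator simplifies through $L_x^2 + L_z^2 = l^2 - L_y^2$.

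At that stage the integrand of $S$ is a rational function of $\mathrm{sn}^2(\lambda t)$ with a single simple pole coming from the factor $1 - L_y^2/l^2$ in the denominator. Integrating over one full period $\lambda t \in [0, 4K(m)]$, the polynomial part yields the complete elliptic integral of the first kind $K(m)$, while the pole part yields the complete elliptic integral of the third kind $\Pi(n,m)$; the characteristic $n$ is read off directly from the location of the pole and matches $n = 1 - 2h\rho$. Collecting prefactors, including the Jacobian $\lambda$ expressed in terms of $h,\rho,\mu$, produces the announced coefficient $4hg/\pi$ with $g = (1 + 2h/\rho)^{-1/2}$.

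The structural part of the argument — reducing $S$ to one-period integrals of a rational function of $\mathrm{sn}^2$ and recognising the combination $\Pi - K$ — is essentially forced by the pole structure of the integrand. The main obstacle is the algebraic bookkeeping: verifying that the three independent physical parameters $(E_t, l, J_x, J_y, J_z)$ really do repackage into $(h, \rho, m, n, g)$ under the stated relations, and that the overall constant collapses to $4hg/\pi$ rather than some equivalent-looking but different expression. Since the formula is quoted from standard sources \cite{Montgomery91,Levi93,Cushman05}, I would present only the 1-form reduction and the $\Pi - K$ identification in detail and treat the coefficient match as a mechanical verification.
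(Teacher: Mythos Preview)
The paper does not actually prove this lemma: it is stated explicitly ``without proof'' with a pointer to \cite{Montgomery91,Levi93,Cushman05}. Your proposal therefore supplies a derivation where the paper offers none, and the route you outline --- Stokes' theorem with the primitive $(1-\cos\theta)\,d\phi$, the classical Jacobi--elliptic parametrisation of the Euler orbit, and recognition of the resulting one-period integral as a $\Pi(n,m)-K(m)$ combination coming from a single simple pole --- is essentially the standard computation behind those references.

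There is, however, a slip in your setup that would derail the bookkeeping. The stage-3 orbit does \emph{not} encircle the middle-axis equilibrium $(0,\pm l,0)$: that axis is unstable, and only separatrices meet it. In the diver's ordering $J_z<J_y<J_x$, the kick raises the energy above $l^2/(2J_y)$, so the orbit encircles the stable twist axis $L_z$; this is also why the symmetric limit $J_x=J_y$ in Theorem~6 produces circles of constant $L_z$. In the Jacobi solution it is therefore $L_z$ that is proportional to $\mathrm{dn}$ (never changing sign) and $L_y\propto\mathrm{sn}$, not the assignment you wrote. Your Stokes integral with the $y$-axis as pole still returns the correct enclosed area, since both $y$-poles lie on the same side of $\gamma$, but the subsequent pole analysis changes: the factor $l^2-L_y^2$ in the denominator now carries $\mathrm{sn}^2$ rather than $\mathrm{dn}^2$, which is precisely what places the characteristic at $n=1-2h\rho$. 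Once the axis assignment is corrected the rest of your programme goes through as described.
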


\begin{remark}
The essential parameter is $E_t J_y/l^2$ inside $h$, and  all other dependences are on certain dimensionless
combinations of moments of inertia. 
Note that the notation in the Lemma uses the classical letter $n$ and $m$ for the arguments 
of the complete elliptic integrals, these are not to be confused with the number of twists and 
somersaults.
\end{remark}

\begin{remark}
This is the solid angle enclosed by the curve towards the north-pole 
and when measuring the solid angle between the equator and the curve the results is $2 \pi - S$.
This can be seen by considering $h \to 1/( 2 \rho)$ which implies $m \to 0$ and $n \to 0$, 
and hence $S \to 0$.
\end{remark}

Using this Lemma we can find simple expressions for the period and rotation number by 
noticing that the action variables of the integrable system on the $\L$-sphere
are given by $l S/\pi$:
\begin{lemma}\label{lem:tT}
The derivative of the action $l S /( 2 \pi) $ with respect to the energy $E$ gives the inverse of the frequency 
$(2\pi)/T$ of the motion, such that the period $T$ is
\[
 T = \frac{ 4 \pi g}{\mu l}  K(m) \,. % (2.62), (2.60), also (5.136)
\]
and the derivative of the action $lS/( 2\pi) $ with respect to $l$ gives the rotation number $-W$.
\end{lemma}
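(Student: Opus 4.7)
My plan is to treat the stage-3 rigid body as a one-degree-of-freedom integrable system on the reduced phase space, the sphere $|\L|^2 = l^2$, equipped with the symplectic form induced by the Lie-Poisson bracket $\{f,g\} = \nabla f \cdot \L \times \nabla g$. This form is well known to equal $l$ times the round area element on the unit sphere. Consequently, the action variable enclosed by a periodic orbit of the reduced Euler flow (the intersection of the sphere with the ellipsoid $\L J^{-1}\L = 2E$) is
\[
 J(E,l) = \frac{1}{2\pi}\oint p \, \dee q = \frac{l\, S(h,\rho)}{2\pi},
\]
where $S(h,\rho)$ is the solid angle given in Lemma~\ref{lem:Ssphere}. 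Both claims in the present Lemma are then special cases of the universal action-angle relations $\partial J/\partial E = 1/\omega$ and, for the reconstruction angle $\phi$ conjugate to the Casimir $l$, the rotation-number identity $W = -(\partial J/\partial l)_E$.

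For the period I would write $T = 2\pi\,\partial J/\partial E = l\,\partial S/\partial E$, and since the energy enters $S$ only through $h = (2E J_y/l^2 - 1)/\mu$, apply the chain rule with $\partial h/\partial E = 2J_y/(\mu l^2)$. What remains is to differentiate the combination $h\bigl(\Pi(n,m) - K(m)\bigr)$ with respect to $h$, keeping $\rho$ fixed. The parameters $m$ and $n$ both depend on $h$, so I would use the Legendre/Jacobi derivative identities for complete elliptic integrals (in particular $\dee K/\dee m$ and $\partial \Pi/\partial n$, $\partial\Pi/\partial m$) and watch for the cancellations Montgomery exploits in \cite{Montgomery91}: the contributions from $\partial_h \Pi$ should cancel all terms except the one producing $K(m)$, leaving $T = 4\pi g K(m)/(\mu l)$ after inserting $g = (1+2h/\rho)^{-1/2}$.

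For the rotation number I would argue as follows. In the full (unreduced) phase space, $\phi$ is the angle of rotation about $\l$, canonically conjugate to the component $l$ of the angular momentum in the direction $\l$; after symmetry reduction $l$ becomes a Casimir, and $\phi$ is reconstructed along trajectories of the reduced flow. Standard action-angle theory with $(J,l)$ as the two actions gives frequencies $\omega_{\text{twist}} = (\partial H/\partial J)_l$ and $\omega_\phi = (\partial H/\partial l)_J$, so the rotation number $W$ = (somersault frequency)/(twist frequency) equals $(\partial H/\partial l)_J/(\partial H/\partial J)_l$, which by the implicit-function theorem at fixed $H$ equals $-(\partial J/\partial l)_E$. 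The sign convention in the statement accounts for the opposite sense of the reconstructed rotation relative to the reduced orbit, giving $-W = \partial(lS/(2\pi))/\partial l$.

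The step I expect to be genuinely delicate is the elliptic-integral manipulation in the period calculation: $S$ is written as $(4hg/\pi)(\Pi(n,m)-K(m))$ with $h$, $g$, $m$, $n$ all depending on $h$ in a coupled way through $\rho$, and the proof that $\partial S/\partial E$ collapses to a single $K(m)$ relies on the Legendre relation and standard derivative formulas for $\Pi$ with respect to its characteristic and modulus. The rotation-number part is then essentially bookkeeping once the correct sign convention for the reconstruction phase is fixed, but establishing it rigorously requires invoking the symplectic-reduction picture sketched above rather than a direct calculation.
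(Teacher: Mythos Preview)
Your setup is correct and matches the paper: the action on the $\L$-sphere is $lS/(2\pi)$, and the two claims are the standard action-angle identities $T/(2\pi) = \partial_E J$ and $-W = \partial_l J$. Where you diverge is in the execution. The paper's key move is the scaling observation that the Euler top is invariant under $\L \mapsto s\L$, $E \mapsto s^2 E$, so $S$ is a function of $E/l^2$ alone. Writing $J = (l/2\pi)\,S(E/l^2)$ and differentiating in the two variables then gives both statements in one line each, and as a free byproduct produces the explicit relation $-2\pi W = S - 2ET/l$ linking rotation number, solid angle, and period. The paper does \emph{not} carry out the elliptic-integral differentiation you anticipate; the explicit form $T = 4\pi g K(m)/(\mu l)$ is simply read off from a known action formula (cited from \cite{PD12}) rather than obtained by cancelling $\Pi$-derivatives via Legendre identities. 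Your plan would work, but it is considerably more laborious than what the paper actually does, and by not isolating the $E/l^2$ dependence you miss the clean identity $-2\pi W = S - 2ET/l$. Your frequency-ratio/implicit-function argument for $W$ is also correct but more abstract than the paper's direct one-line computation of $\partial_l\bigl(lS(E/l^2)\bigr)$.
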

\begin{proof}
The main observation is that the symplectic form on the $\L$-sphere of radius $l$ is the area-form 
on the sphere divided by $l$, and that the solid angle on the sphere of radius $l$ is the enclosed area divided by $l^2$.
Thus the action is $ l S / ( 2\pi)$ and the area is $l^2 S$. 
From a different point of view the reason that the essential object is the solid angle is that the 
Euler-top has scaling symmetry: if $\L$ is replaced by $s \L$ then  $E$ is replaced by $s^2 E$
and nothing changes. This implies that the essential parameter is the ratio $E/l^2$ and the solid 
angle is a function of $E/l^2$ only. 
A direct derivation of the action in which $h$ is the essential parameter can be found in  \cite{PD12}.
Now differentiating the action $l S(E/l^2) / ( 2\pi) $ with respect to $E$ gives 
\[
  \frac{ T}{2\pi} =  \frac{ \partial l S(E/l^2) }{2\pi \partial E} = \frac{1}{2\pi l} S'( E/l^2)
\]
and differentiating the action with respect to $l$ gives 
\[
   -2\pi W =  \frac{ \partial l S(E/l^2) }{\partial l} = S(E / l^2) - \frac{2 E}{l^2} S'(E/l^2)  = S - \frac{ 2 E T}{ l}\vspace{-7mm}
\]
\end{proof}

\vspace{0mm}
\begin{remark}
What is not apparent in these formulas are that the scaled period $l T$ is a relatively simple
complete elliptic integral of the first kind (depending on $E/l^2$ only),
while $S$ and $W$ are both complete elliptic integrals of the third kind
(again depending on $E/l^2$ only).
\end{remark}

\begin{remark}
In general the rotation number is given by the ratio of frequencies, and those can be computed 
from derivatives of the actions with respect to the integrals. If one of the integrals (of the original 
integrable system) is a global $S^1$ action, then the simple formula $W = \partial I / \partial l$
results, which is the change of the action $I$ with respect to changing the other action $l$
while keeping the energy constant. 
% , see, e.g.~\cite{}.
% \footnote{and also this is a super-integrable system, but that does not matter...}
\end{remark}

%\begin{remark}
%Incidentally the last line of the proof recovered Montgomery's formula \cite{Montgomery91}, 
%which gives $\Delta\phi$ the interpretation of a rotation number $W$ of the integrable system,
%and splits it into dynamics phase (proportional to $E$) and geometric phase (given by $S$).
%A small difference is that our formula does not have a $\bmod 2\pi$, which ...
%\end{remark}
%\begin{lemma}
%The somersault-per-twist rotation number is given by $2 W - 1$.
%\end{lemma}

\begin{theorem}
The total amount of rotation $\Delta \phi_{kick}$ about the fixed angular momentum axis~$\l$ for the kick-model
when performing $n$ twists is given by 
\begin{equation} \label{eqn:kick}
    \Delta \phi_{kick} = (\tau_1 + \tau_5) \frac{2 E_s }{l} + \tau_3 \frac{2 E_t}{l} - n S\,.
\end{equation}
The first terms are the dynamic phase where $E_s$ is the energy in the somersault stages and
$E_t$ is the energy in the twisting somersault stage.
The last term is the geometric phase where $S$ is the solid angle enclosed by the orbit in the twisting somersault 
stage. For equal moments of inertia $J_x = J_y$ the solid angle $S$ is
\[
 S = 2\pi \sin( \mathcal{\ccc} + \ppp) % this is for equal moments only
\]
and in general is given by Lemma~\ref{lem:Ssphere}.
%\[
%   S = \frac{4 h  g}{\pi}  ( \Pi(n,m) -  K(m)) \,,
%\]
%with dimensionless parameters 
%$m = \rho ( 1 - 2 h \rho)/( 2 h + \rho)$, 
%$n = 1 - 2 h \rho$, 
%$g = ( 1 + 2 h /\rho)^{-1/2}$, 
%$h = ( 2 E_t J_y/l^2 - 1)/ \mu$,
%$\rho = (1 - J_y/J_z)/(J_y/J_x - 1)$, 
%$\mu = (J_y/J_x - 1)(1 - J_y/J_z)$.
To perform $n$ twists the time necessary is $\tau_3 = n T_t$ where
again for equal moments of inertia $J_x = J_y$ the period $T_t$ is
\[
   T_t = \frac{2\pi}{l} \frac{(J_y^{-1} - J_z^{-1})^{-1}}{ \sin ( \ccc + \ppp )}  % (5.95)
\]
and in general is given by Lemma~\ref{lem:tT} where $T=T_t$.
%\[
% T_t = \frac{ 4 \pi g}{\mu l}  K(m) \,. % (2.62), (2.60), also (5.136)
%\]
\end{theorem}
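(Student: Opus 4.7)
The plan is to decompose $\Delta\phi_{kick}$ additively over the five stages and evaluate each contribution, using Lemma~\ref{lem:tT} for the non-trivial stage 3 contribution. Write $\Delta\phi_{kick}=\sum_{i=1}^{5}\Delta\phi_i$. Since the kick stages 2 and 4 take zero time ($\tau_2=\tau_4=0$), have angular velocity $\O\approx -I^{-1}\A$ parallel to the $x$-axis (nearly orthogonal to $\L$), and move $\L$ along a meridian of the $\L$-sphere (so no solid angle is swept), one obtains $\Delta\phi_2=\Delta\phi_4=0$ in the impulsive limit.

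In the somersault stages 1 and 5, $\L$ sits at the relative equilibrium $\L_s=(0,\pm l,0)^t$, so $\O_s=I_s^{-1}\L_s$ is parallel both to $\L_s$ in the body and to $\l$ in space. The motion is a steady rotation about $\l$ at angular speed $|\O_s|=l/I_{s,yy}=2E_s/l$, using $E_s=\tfrac12 l^2/I_{s,yy}$. This yields $\Delta\phi_1+\Delta\phi_5=(\tau_1+\tau_5)\,2E_s/l$.

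For the twisting stage 3, the orbit on the $\L$-sphere is periodic with period $T_t$, and Lemma~\ref{lem:tT} identifies the reconstruction rotation number $W$ through $-2\pi W=S-2E_tT_t/l$. Thus $\phi$ advances by $2\pi W=2E_tT_t/l-S$ over one full period; performing $n$ twists corresponds to $n$ such periods and hence to $\tau_3=nT_t$, giving $\Delta\phi_3=\tau_3\,2E_t/l-nS$. Summing the three non-zero contributions delivers~(\ref{eqn:kick}).

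It remains to derive the closed-form expressions for $S$ and $T_t$ in the symmetric case $J_x=J_y$. Conjugating the post-kick angular momentum $\L_t=R_x(-\ccc)\L_s$ by the diagonalizer $R_x(\ppp)$ places $\L$ in the $J$-frame at $L_z=-l\sin(\ccc+\ppp)$; since $J_x=J_y$ makes $L_z$ a second integral, the orbit is a small circle at this fixed latitude and the spherical-cap area yields $S=2\pi\sin(\ccc+\ppp)$ (with the hemisphere convention from the Remark after Lemma~\ref{lem:Ssphere}). For the period, Euler's equations with $J_x=J_y$ reduce to precession of $(L_x,L_y)$ about the $z$-axis at frequency $L_z(J_y^{-1}-J_z^{-1})$, so $T_t=2\pi/\bigl(l(J_y^{-1}-J_z^{-1})\sin(\ccc+\ppp)\bigr)$, as claimed. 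The main obstacle is not analytical difficulty but bookkeeping: one must track the signs consistently through the frame change $R_x(-\ccc)$ followed by $R_x(\ppp)$ and select the correct hemisphere in Lemma~\ref{lem:Ssphere} so that the geometric phase enters~(\ref{eqn:kick}) precisely as $-nS$.
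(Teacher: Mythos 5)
Your proposal is correct and follows essentially the same route as the paper: apply Montgomery's phase formula (dynamic phase $2ET/l$ minus the solid angle, here packaged via Lemma~\ref{lem:tT}) stage by stage, with stages 2 and 4 contributing nothing in the impulsive limit, stages 1 and 5 giving only the dynamic phase at the relative equilibrium, and stage 3 supplying both terms over $n$ periods. Your explicit derivations of $S=2\pi\sin(\ccc+\ppp)$ as the zone between the equator and the fixed-$L_z$ circle and of $T_t$ from the axisymmetric precession are details the paper defers to \cite{Tong15}, and they correctly implement the equator-normalisation the paper's proof insists on.
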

\begin{proof}
This is a slight extension of Montgomery's formula \cite{Montgomery91} for how much the rigid body rotates, with 
the added feature that there is no $\mathrm{mod}  \, 2\pi$ for $\Delta \phi_{kick}$: We actually need to know how many somersaults occurred.
The formula is applied to each stage of the dive, but stage 2 and stage 4 do not contribute because $\tau_2 = \tau_4 = 0$.
In stage 1 and stage 5 the trajectory is at an equilibrium point on the $\L$-sphere, so there is only a contribution to the dynamic phase. 
The essential terms come from stage 3, which is the twisting somersault stage without shape change. 
When computing $S$ we need to choose a particular normalisation of the integral which is different from 
Montgomery \cite{Montgomery91,Levi93}, and also different from  \cite{Cushman05}. 
Our normalisation is such that when 
$J_x = J_y$ the amount of rotation obtained is the corresponding angle $\phi$ of the somersault, 
i.e.\ the rotation about the fixed axis $\l$ in space. This means that the correct solid angle for our purpose is 
such that when $J_x = J_y$ and $E_t = E_s$ the contribution is zero. Therefore, we should measure
area $ A = S l^2$ relative to the equator on the sphere. When $J_x = J_y$ we are simply measuring the area
of a slice of the sphere bounded by the equator and the twisting somersault orbit, which the latter is in a 
plane parallel to the $xy$-plane with opening angle $\ccc + \ppp$, see Fig.~\ref{fig:Asimple}.% fig 5.25
\footnote{This is somewhat less obvious then it appears, since the orbit is actually tilted by $\ppp$ 
relative to the original equator. It turns out, however, that computing it in either way gives the same answer.}
In the general case where $J_x \neq J_y$ the area can be computed in terms of elliptic integrals, and the details are given in \cite{Tong15}.
% \footnote{maybe give the explicit Euler angle calculation for the case $J_x = J_y$?}
Similarly the period of the motion along $H = E_t $ can be either computed from explicit solutions
 of the Euler equations for $J_x = J_y$, or by elliptic integrals, again see \cite{Tong15} for the details.
\end{proof}

Now we have all the information needed to construct a twisting somersault.
A result of the kick approximation is that we have $\tau_2 = \tau_4 = 0$, and if we further set $\tau_1=\tau_5=0$ then there is no pure somersault either, which makes this the simplest twisting somersault. We call this dive the pure twisting somersault and take it as a first approximation to understanding the more complicated dives.

\begin{corollary}
A pure twisting somersault with $m$ somersaults and $n$ twists is found for
$\tau_1 = \tau_2 = \tau_4 = \tau_5 = 0$ and must satisfy
\begin{equation} \label{eqn:rot}
    2 \pi m = \Big( 2   l T_t  \frac{ E_t}{l^2} -  S \Big) n 
\end{equation}
where both $S$ and $l T_t$ are functions of $E_t/l^2$ only (besides inertial parameters).
\end{corollary}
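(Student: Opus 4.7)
The plan is essentially to read off the corollary from Theorem~3 by imposing the stated constraints and translating the condition ``$m$ somersaults'' into an equation about $\Delta\phi_{kick}$.

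First I would observe that one somersault corresponds to a rotation by $2\pi$ about the fixed spatial angular momentum axis $\l$, so performing exactly $m$ somersaults over the entire dive is equivalent to requiring $\Delta\phi_{kick} = 2\pi m$. Next, from Theorem~3 the time spent in stage~3 to realise $n$ twists is $\tau_3 = n T_t$, where $T_t$ is the period of the twisting orbit on the $\L$-sphere determined by Lemma~\ref{lem:tT}. Substituting $\tau_1 = \tau_5 = 0$ (and already having $\tau_2 = \tau_4 = 0$ from the kick assumption) into formula~\eqref{eqn:kick} collapses the dynamic phase contribution to the single stage~3 term, yielding
\[
   2\pi m \;=\; (n T_t)\,\frac{2 E_t}{l} \;-\; n S \;=\; n\!\left( \frac{2 l T_t E_t}{l^2} - S\right),
\]
which is precisely \eqref{eqn:rot}.

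The only remaining content is the claim that $S$ and $l T_t$ depend on $E_t/l^2$ only (besides inertial parameters). For this I would invoke the scaling symmetry of the Euler top noted in the proof of Lemma~\ref{lem:tT}: replacing $\L \mapsto s\L$ sends $E \mapsto s^2 E$, so all dimensionless dynamical quantities depend only on the ratio $E/l^2$. Explicitly, $S = S(h,\rho)$ in Lemma~\ref{lem:Ssphere} with $h = (2 E_t J_y/l^2 - 1)/\mu$ depends on $E_t/l^2$, and $T_t = (4\pi g)/(\mu l)\,K(m)$ shows that $l T_t$ is a function of the same quantity.

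There is no real obstacle here: the corollary is a direct specialisation of Theorem~3 combined with the integer-somersault constraint $\Delta\phi_{kick}\in 2\pi\Z$. The only subtlety worth flagging is the distinction between ``$m$ somersaults exactly'' and ``$m$ somersaults modulo orientation at entry''; since the pure twisting somersault has $\tau_1 = \tau_5 = 0$ there is no freedom left to adjust $\phi$ in the rigid stages, so the equality must hold on the nose, which is exactly the form asserted.
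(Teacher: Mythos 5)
Your proposal is correct and follows exactly the paper's route: substitute $\Delta\phi_{kick}=2\pi m$, $\tau_3=nT_t$, and $\tau_1=\tau_5=0$ into the kick formula of Theorem~3, with the $E_t/l^2$-dependence read off from Lemmas~\ref{lem:Ssphere} and~\ref{lem:tT}. The paper's own proof is just the one-line version of this same specialisation.
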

\begin{proof}
This is a simple consequence of the previous theorem by setting $\Delta \phi = 2 \pi m$, $\tau_3 = n T_t$,
and $\tau_1 = \tau_5 = 0$.
\end{proof}

\begin{remark} 
Solving \eqref{eqn:rot} for $m/n$ gives a rotation number of the Euler top, which characterizes 
the dynamics on the 2-tori of the super-integrable Euler top. This rotation number is equivalent 
up to uni-modular transformations to that of Bates et al \cite{Cushman05}.
\end{remark}

\begin{remark}
The number of somersaults per twists is $m/n$, and \eqref{eqn:rot} determines $E_t/l^2$ (assuming the inertial parameters are given).
Having $E_t/l^2$ determined in this way means one would need to find a shaped change or kick which achieves that $E_t/l^2$, and
large values of $E_t/l^2$ can be hard or impossible to achieve.
For the one-arm kick-model discussed above the energy that is reached is given by 
\[
    \frac{E_t}{l^2} = \frac12 \L_s R_x(\ccc + \ppp) J R_x(-\ccc - \ppp) \L_s/l^2. % = \frac{E_s}{l^2} + ...
\]
\end{remark}

\begin{remark}
Given a particular shape change (say, in the kick-approximation) the resulting $E_t/l^2$ will 
in general not result in a rational rotation number, and hence not be a solution of \eqref{eqn:rot}.
In this case the pure somersault 
%in the beginning and/or end of the dive
of stage 1 and/or stage 5
needs to be used to achieve a solution of \eqref{eqn:kick}
instead.
\end{remark}

\begin{remark}
The signs are chosen so that $S$ is positive in the situation we consider. Thus the geometric phase 
lowers $\Delta \phi$, and can be thought of as an additional cost that twisting on for somersaulting. 
\end{remark}

\begin{figure}
\centering
\includegraphics[width=10cm]{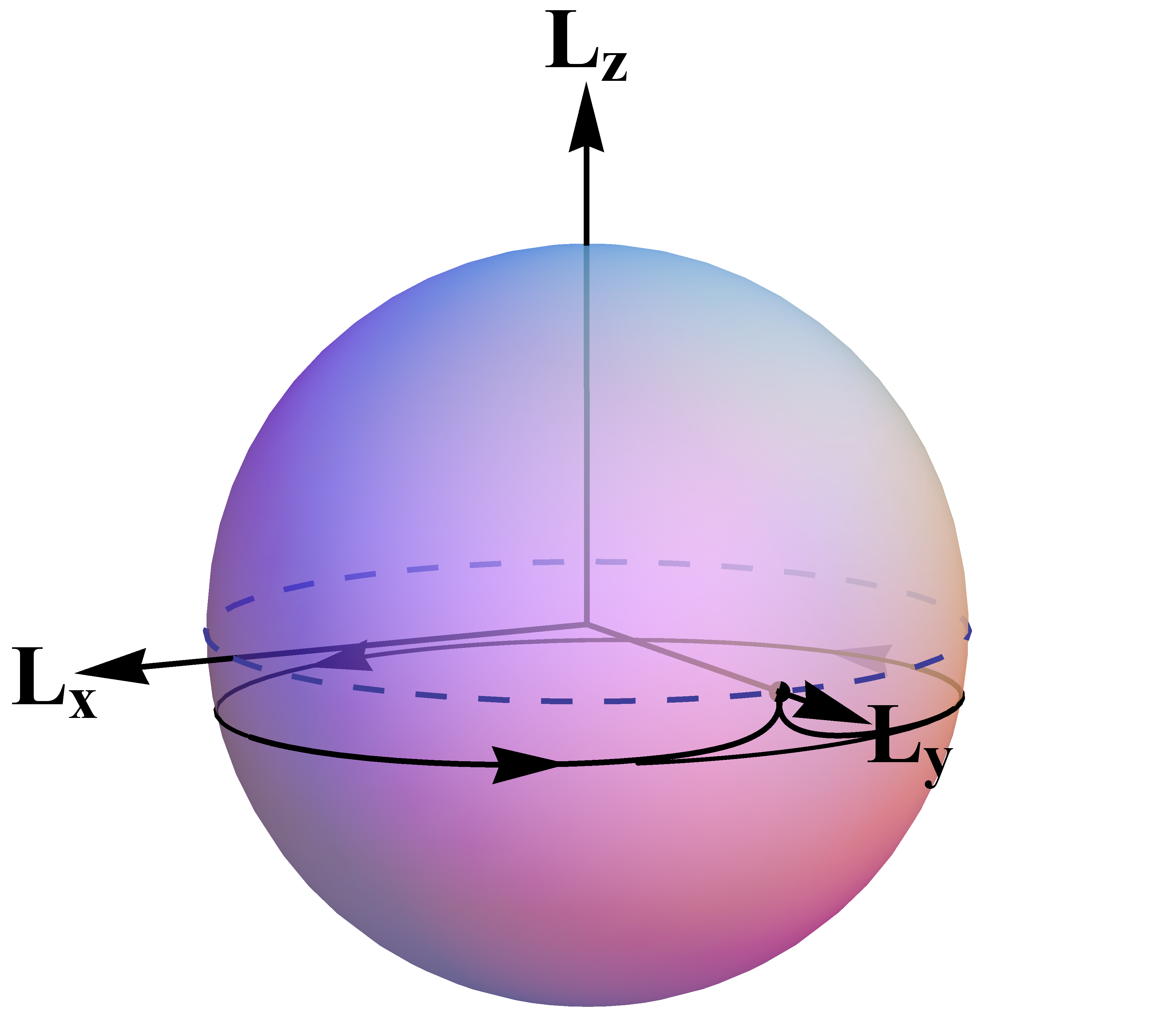}
\caption{Twisting somersault with $m=1.5$ somersault and $n=3$ twists on the sphere $|\L| = l$.
The orbit starts and finishes on the $L_y$-axis with stage 1 and stage 5. 
Shape-changing stages 2 and 4 are the curved orbit segments that start and finish 
at this point. The twisting somersault stage 3 appears as a slightly deformed 
circle below the equator (dashed).
}\label{fig:spacefull}
\end{figure}

\begin{figure}
\centering
\includegraphics[width=7cm]{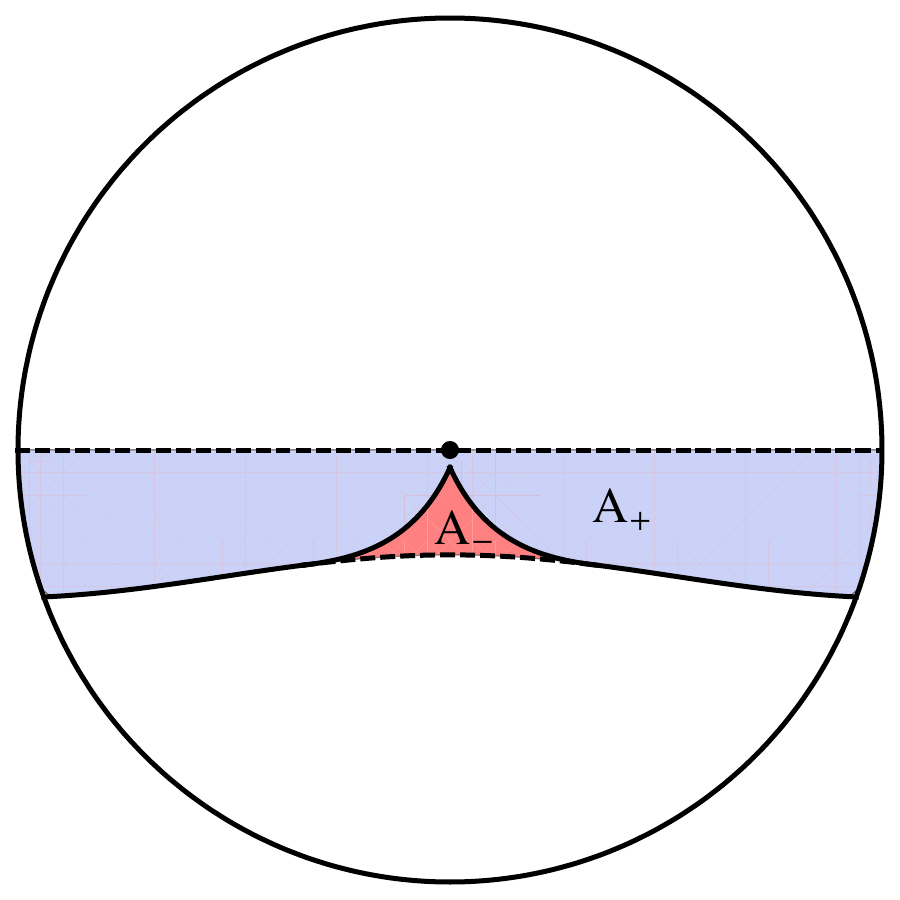}
\caption{Areas $A_+$ and $A_-$ corresponding to solid angles $S_+ = A_+/l^2$ and $S_- = A_-/l^2$.
The geometric phase correction due to the shape change is given by $S_-$}
\label{fig:Aminus}
\end{figure}

The total airborne time $T_{air}$ has small variability for platform diving, 
and is bounded above for springboard diving. A typical dive has $ 1.5 < T_{air} < 2.0$ seconds. 
After $E_t/l^2$ is determined by the choice of $m/n$ the airborne time can be adjusted 
by changing $l$ (within the physical possibilities) while keeping $E_t/l^2$ fixed.
% Note that the last term depends on $E_t/l^2$ only, so equality can be achieved by changing $l$ with $E_t/l^2$ constant.
Imposing $T_{air} = \tau_1 + \tau_5 + \tau_3$ we obtain:

\begin{corollary}
A twisting somersault with $m$ somersaults and $n$ twists in the kick-model must satisfy 
\[
     2 \pi m + n S = T_{air} \frac{ 2 E_s}{l} + 2 n l T_t \frac{ E_t - E_s}{l^2}
\]
where $T_{air}  - \tau_3  = \tau_1 + \tau_5 \ge 0$.
\end{corollary}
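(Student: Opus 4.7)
The plan is to substitute directly into the master kick-model identity \eqref{eqn:kick} the two extra pieces of data that define this scenario: the count of $m$ somersaults and the total time budget $T_{air}$. The whole proof is bookkeeping, and the main point is keeping track of which stage contributes which energy.

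First I would impose the head-first entry condition as $\Delta \phi_{kick} = 2\pi m$, exactly as in the preceding Corollary; this is what it means to perform $m$ full somersaults about the fixed spatial axis $\l$. Next, since the kick approximation gives $\tau_2 = \tau_4 = 0$, the airborne time satisfies $T_{air} = \tau_1 + \tau_3 + \tau_5$, hence $\tau_1 + \tau_5 = T_{air} - \tau_3$, and the non-negativity of individual stage times becomes the stated inequality $T_{air} - \tau_3 \ge 0$. The twist count fixes $\tau_3 = n T_t$, with $T_t$ the period of the stage-3 orbit supplied by Lemma~\ref{lem:tT}.

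Plugging these into \eqref{eqn:kick} gives
\[
  2\pi m = \bigl(T_{air} - n T_t\bigr)\frac{2 E_s}{l} + n T_t \frac{2 E_t}{l} - n S,
\]
and grouping the $n T_t$ terms on the right yields
\[
  2\pi m + n S = T_{air}\,\frac{2 E_s}{l} + 2 n T_t \,\frac{E_t - E_s}{l},
\]
which is the claimed identity once one writes $2 n T_t (E_t - E_s)/l = 2 n l T_t (E_t - E_s)/l^2$, the form preferred in the statement because both $l T_t$ and $S$ depend only on the dimensionless combination $E_t/l^2$ (see the remark following Lemma~\ref{lem:tT}).

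There is no real obstacle: the result is a pure substitution from Theorem~2. The only subtlety worth flagging is remembering that stages 1 and 5 share the common somersaulting energy $E_s$, while the twisting stage 3 carries the post-kick energy $E_t$; the combination $E_t - E_s$ in the final formula is precisely the energy injected by the two instantaneous kicks and is what couples the twist count $n$ to the total airborne time.
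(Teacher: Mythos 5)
Your proof is correct and is exactly the computation the paper intends: the corollary follows by substituting $\Delta\phi_{kick}=2\pi m$, $\tau_1+\tau_5=T_{air}-\tau_3$, and $\tau_3=nT_t$ into \eqref{eqn:kick} and rearranging. The paper gives no explicit proof beyond the phrase ``Imposing $T_{air}=\tau_1+\tau_5+\tau_3$ we obtain,'' so your bookkeeping matches its implicit argument, including the rewriting as $2nlT_t(E_t-E_s)/l^2$ to expose the dependence on $E_t/l^2$.
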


%\begin{remark}
%It is interesting to compare the pure somersault case $n=0$ with the twisting case $n > 0$ 
%with fixed $T_{air}$. In that case the additional dynamic and geometric phase must exactly 
%be equal, namely $S = 2 T_t ( E_t - E_s)/l$, independent of $n$.
%\footnote{this is strange, because in general I don't think that this is true, maybe 
%at least for equal moments of inertia?}
%\end{remark}
%

\section{The general twisting somersault}

The kick-model gives a good understanding of the principal ingredients needed in a successful dive. 
In the full model the shape-changing times $\tau_2$ and $\tau_4$ need to be set to realistic values.
We estimate that the full arm motion takes at least about $1/4$ of a second. So instead of having a kick 
connecting $\L_s$ to $\L_t(0)$, a piece of trajectory from the time-dependent Euler equations needs to be inserted, which can be seen in Fig.~\ref{fig:Lqfull}. 
The computation of the two dive segments from stage 2 and stage 4 has to be done numerically in general. 
Nevertheless, this is a beautiful generalisation of Montgomery's formula 
due to Cabrera \cite{Cabrera07}, which holds in the non-rigid situation.
In Cabrera's formula the geometric phase is still given by the solid angle enclosed by the 
trajectory, however for the dynamic phase instead of simply $2ET$ we actually need to integrate
$\L \cdot \O$ from 0 to $T$. Now when the body is rigid we have $2 E = \L \cdot \O = const$ and Cabrera's formula 
reduces back to Montgomery's formula.

\begin{theorem}
For the full model of a twisting somersault with $n$ twists,
the total amount of rotation $\Delta\Phi$ about the fixed angular momentum axis $\l$  is given by 
\[
  \Delta \phi = \Delta \phi_{kick} + \frac{ 2 \bar E_2 \tau_2 }{l}  + \frac{ 2 \bar E_4 \tau_4 }{l}  + S_-
\]
where $S_-$ is the solid angle of the triangular area on the $\L$-sphere enclosed by the trajectories of 
the shape-changing stage 2 and stage 4, and part of the trajectory of stage 3, 
see Fig.~\ref{fig:Aminus}.
The average energies along the transition segments are given by 
\[
   \bar E_i = \frac{1}{2\tau_i} \int_0^{\tau_i} \L \cdot \O \, \dee t, \quad i = 2, 4 % (5.139)
\]
\end{theorem}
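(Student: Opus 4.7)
The plan is to apply Cabrera's extension \cite{Cabrera07} of Montgomery's formula to the full closed $\L$-trajectory as a single loop, and then split the result additively over the five stages. Cabrera's formula says that the total rotation about the fixed angular momentum axis equals a dynamic part $\tfrac{1}{l}\int \L\cdot\O\,\dee t$ plus a geometric part given by (minus) the solid angle enclosed on the $\L$-sphere; unlike Montgomery's rigid body version, the integrand $\L\cdot\O$ is no longer constant during shape change. Both the integral and the area form are additive over the partition $[0,T_{air}] = \cup_i [\tau_{i-1},\tau_i]$, so the proof reduces to computing each stage's contribution.

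First I would tally the dynamic phase stage by stage. In stages 1 and 5 the trajectory sits at the equilibrium $\pm\L_s$ of the rigid Euler top, so $\L\cdot\O = 2E_s$ is constant and the contribution is $(\tau_1+\tau_5)\cdot 2E_s/l$. In stage 3 the shape is again rigid, so $\L\cdot\O = 2E_t$ and the contribution is $\tau_3\cdot 2E_t/l$, with $\tau_3 = n T_t$. In the shape-changing stages 2 and 4 there is no rigid-body constant of motion, but by the very definition of $\bar E_i$ in the theorem statement the contributions are exactly $2\bar E_2\tau_2/l$ and $2\bar E_4\tau_4/l$.

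Next I would decompose the total solid angle $S_{\mathrm{tot}}$ enclosed by the $\L$-trajectory. Stages 1 and 5 are point orbits and contribute nothing. Stages 2, 3, and 4 together form a closed curve based at $\L_s$: the stage 3 arc is an $n$-fold traversal of the periodic twisting orbit, which would contribute $nS$ (with $S$ as in Lemma~\ref{lem:Ssphere}, measured relative to the equator), while stages 2 and 4 are short curves from $\L_s$ to two points on that periodic orbit. Closing the loop back at $\L_s$ modifies the naive $nS$ by the signed solid angle of the triangular region bounded by the stage-2 curve, the stage-4 curve, and a short arc of the stage-3 orbit (Fig.~\ref{fig:Aminus}); by the additivity of signed area on the sphere this is exactly $S_-$, so $S_{\mathrm{tot}} = nS - S_-$ and the geometric contribution to $\Delta\phi$ is $-nS + S_-$.

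Adding the dynamic and geometric pieces and regrouping against
\[
  \Delta\phi_{kick} = (\tau_1+\tau_5)\frac{2E_s}{l} + \tau_3\frac{2E_t}{l} - nS
\]
produces the claimed identity. The main obstacle is the orientation bookkeeping for $S_-$: one needs to verify that the signed solid angle of the triangular region appears with the sign consistent with the equator-relative normalisation of $S$ chosen in Theorem~3, so that $S_{\mathrm{tot}} = nS - S_-$ rather than $nS + S_-$. This hinges on how stages 2 and 4 enter and leave the stage 3 orbit relative to the direction in which $\L$ traces out the twisting curve; once the sign is fixed by inspection of Fig.~\ref{fig:Aminus}, the rest is straightforward.
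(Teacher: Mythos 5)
Your proposal is correct and follows essentially the same route as the paper: a stage-by-stage application of Cabrera's formula, with the rigid stages reproducing the kick-model terms, the shape-changing stages contributing the averaged dynamic phases $2\bar E_i\tau_i/l$ by definition, and the geometric phase decomposing as $nS - S_-$ via additivity of signed solid angle. Your treatment is in fact more detailed than the paper's own proof, particularly in making explicit the orientation bookkeeping for $S_-$, which the paper leaves to the figure.
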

\begin{proof}
This is a straightforward application of Cabrera's formula. For stage 1, stage 3, and stage 5 where there is no shape change
the previous formula is obtained. For stage 2 and stage 4 the integral of $\L \cdot \O$ along the numerically 
computed trajectory with time-dependent shape is computed to give the average energy during the shape change.
\end{proof}

\begin{figure}[ht]
\centering
\includegraphics[width=10cm]{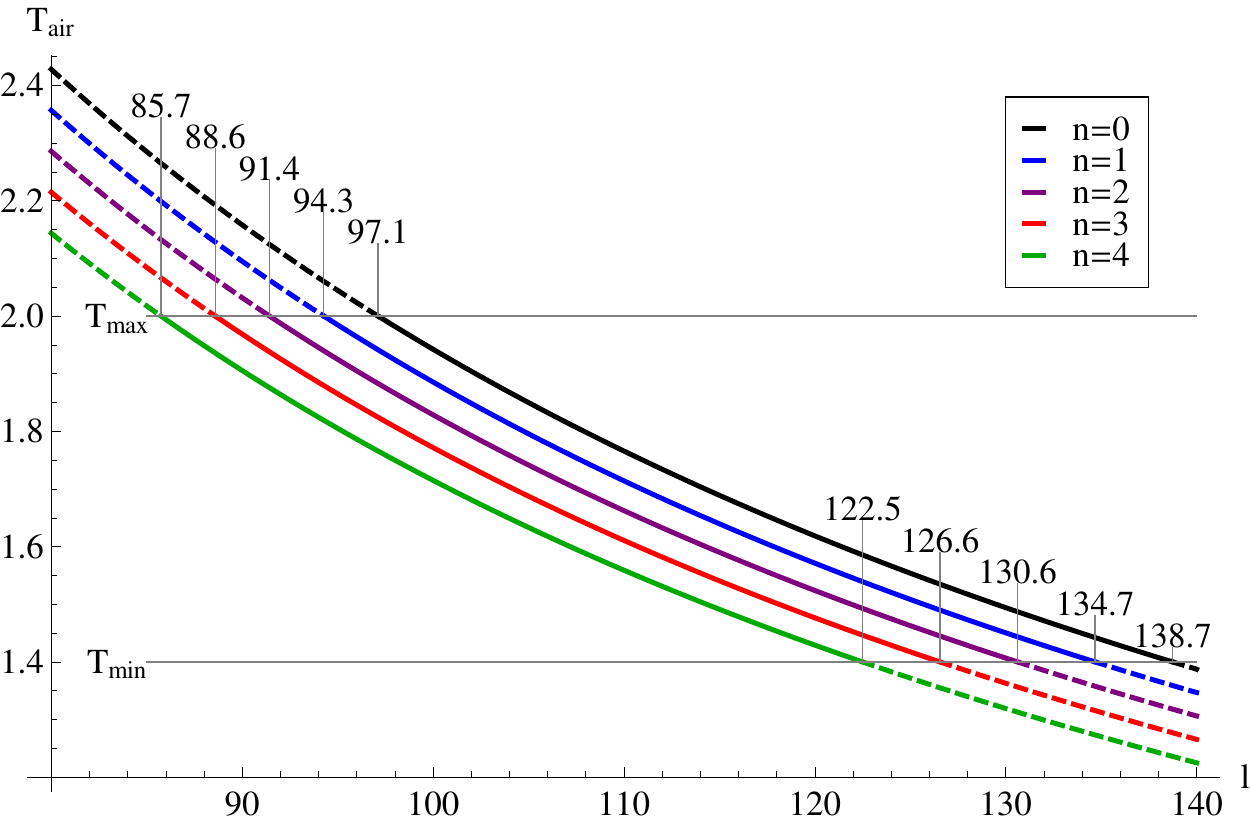}
\caption{The relationship between airborne time $T_\mathit{air}$ and angular momentum $l$ when $\tau_2=\tau_4=1/4$ is used in corollary~\ref{cor:fin}. 
The result is for the case of $m=1.5$ somersaults with different number of $n$ twists. 
 The maximum number of twists is $n=4$ since we need 
 % $\tau_k\geq 0$ for each $k\in\{1,\dots, 5\}$.
$T_{air} -   \tau_2 - \tau_3 - \tau_4  = \tau_1 + \tau_5 \ge 0$.
% , see Figure \ref{fig:taudistribution}. 
%The bounds for airborne time $T_\mathit{air}$ are given by $T_\mathit{min}$ and $T_\mathit{max}$, given in \eqref{eq:Tmin} and \eqref{eq:Tmax} respectively. 
} \label{fig:tvlan}
\end{figure}

\begin{remark}
This quantifies the error that occurs with the kick-model. The geometric phase is corrected by the solid angle $S_-$ of a
small triangle, see Fig.~\ref{fig:Aminus}. The dynamic phase is corrected by adding terms proportional to $\tau_2$ 
and $\tau_4$. Note, if we keep the total time $\tau_2+ \tau_3 + \tau_4$ constant then we can think of the shape-changing times $\tau_2$ and $\tau_4$ from the full model as being part of the twisting somersault time $\tau_3$ of the kick-model. The difference is $2\Big((\bar{E}_2-E_t)\tau_2+(\bar{E}_4-E_t)\tau_4\big)/l$, and since both $\bar{E}_2$ and $\bar{E}_4 < E_t$ 
%then we can think of the time $\tau_2, \tau_4$ in the full model as part of time $\tau_3$ in the kick-model. Hence the error in the dynamic phase is $2 \tau_2 ( \bar E_2 - E_t) / l $ and $2 \tau_4 ( \bar E_4 - E_t)/l$ and since $\bar E_{2,4} < E_t$  
the dynamics phase in the full model is slightly smaller than in the kick-model.
\end{remark}

\begin{remark}
As $E_t$ is found using the endpoint of stage 2 it can only be calculated numerically now.
\end{remark}

The final step is to use the above results to find parameters that will achieve $m$ somersaults and $n$ twists, 
where typically $m$ is a half-integer and $n$ an integer. 
\begin{corollary}
A twisting somersault with $m$ somersaults and $n$ twists  satisfies 
\[
      2 \pi m + n S - S_- =  T_{air} \frac{ 2 E_s} { l} 
      +  2 \tau_2 \frac{\bar E_2 - E_s }{l} + 2 \tau_4 \frac{ \bar E_4 - E_s}{l}  + 2 \tau_3 \frac{ E_t - E_s}{ l}
\]
where $T_{air} -   \tau_2 - \tau_3 - \tau_4  = \tau_1 + \tau_5 \ge 0$.
\label{cor:fin}
\end{corollary}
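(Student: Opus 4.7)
The plan is to combine the full-model phase formula from the preceding Theorem with the kick-model decomposition \eqref{eqn:kick} and then impose the two remaining physical conditions: the dive condition $\Delta\phi = 2\pi m$ (so that the body has rotated through $m$ somersaults about $\l$) and the airborne-time accounting $T_{air} = \tau_1+\tau_2+\tau_3+\tau_4+\tau_5$. Everything needed is already in place; the work is purely algebraic.

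First I would substitute
\[
  \Delta\phi_{kick} = (\tau_1+\tau_5)\frac{2E_s}{l} + \tau_3\frac{2E_t}{l} - nS
\]
from \eqref{eqn:kick} into the preceding Theorem's identity
\[
  \Delta\phi = \Delta\phi_{kick} + \frac{2\bar E_2\tau_2}{l} + \frac{2\bar E_4\tau_4}{l} + S_-,
\]
which expresses $\Delta\phi$ as a single linear combination of the five stage durations (with coefficients $2E_s/l$, $2\bar E_2/l$, $2E_t/l$, $2\bar E_4/l$, $2E_s/l$) plus the two geometric-phase contributions $-nS$ and $+S_-$.

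Next I would set $\Delta\phi = 2\pi m$ (a half-integer $m$ yielding an odd multiple of $\pi$ is what the head-first entry condition requires) and move the geometric terms to the left, obtaining
\[
  2\pi m + nS - S_- = (\tau_1+\tau_5)\frac{2E_s}{l} + \tau_3\frac{2E_t}{l} + \frac{2\bar E_2\tau_2}{l} + \frac{2\bar E_4\tau_4}{l}.
\]
Then I would eliminate $\tau_1+\tau_5$ using $\tau_1+\tau_5 = T_{air}-\tau_2-\tau_3-\tau_4$. Distributing $2E_s/l$ across this difference and collecting the coefficient of each $\tau_i$ with the corresponding stage energy produces exactly the three ``energy excess'' terms $2\tau_i(E_i-E_s)/l$ with $E_2:=\bar E_2$, $E_3:=E_t$, $E_4:=\bar E_4$, together with the single leading term $T_{air}\,2E_s/l$. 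The inequality $\tau_1+\tau_5\ge 0$ is just the physical feasibility constraint that the two pure-somersault stages have nonnegative duration.

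There is no real analytic obstacle here beyond the preceding Theorem; the only point deserving care is the sign bookkeeping of the geometric phases. One must use the same orientation convention for $S$ and $S_-$ as in the preceding Theorem (twisting contributes $-nS$, the shape-change correction contributes $+S_-$) so that, after transposition to the left-hand side, the combination becomes $+nS - S_-$ as stated.
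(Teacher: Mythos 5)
Your derivation is correct and is exactly the argument the paper intends: the corollary is stated without an explicit proof, but it follows from the full-model theorem combined with \eqref{eqn:kick} by setting $\Delta\phi = 2\pi m$ and eliminating $\tau_1+\tau_5$ via $T_{air}$, precisely as you do (and in parallel with the paper's one-line proof of the earlier kick-model corollary). The sign bookkeeping for $-nS$ and $+S_-$ and the regrouping into the energy-excess terms $2\tau_i(\bar E_i - E_s)/l$ all check out.
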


\begin{remark}
Even though $\bar E_2, \bar E_4, E_t$, and $S_-$ have to be computed numerically in this formula, the geometric interpretation is as clear as before: The geometric phase is given by the area terms $nS$ and $S_-$. 
\end{remark}

In the absence of explicit solutions for the shape-changing stages 2 and 4, we have numerically evaluated 
the corresponding integrals and compared the predictions of the theory to a full numerical simulation. 
The results for a particular case and parameter scan are shown in Fig.~\ref{fig:Lqfull} and Fig.~\ref{fig:tvlan} respectively, and the agreement between theory and numerical simulation is extremely good.
Fixing the shape change and the time it takes determines $E_t/l^2$, so the essential parameters to be adjusted by the athlete are the angular momentum $l$ and airborne time $T_{air}$ (which are directly related to the initial angular and vertical velocities at take-off).
Our result shows that these two parameters are related in a precise way given in Corollary~\ref{cor:fin}.
%For realistic parameters (as determined in \cite{Tong15} in detail) there are solutions for $m = 3/2$ and $n = 0,1,2,3,4$ which we show in Fig.~\ref{fig:tvlan}. 
At first it may seem counterintuitive that a twisting somersault with more twists (and same number of somersaults) requires less angular momentum when the airborne time is the same, as shown in Fig.~\ref{fig:tvlan} for $m = 3/2$ and $n = 0,1,2,3,4$. The reason is that while twisting, the moments of inertia relevant for somersaulting are smaller than not twisting, since pure somersaults take layout position as shown in Fig.~\ref{fig:animationlayout}, hence less overall time is necessary. 
In reality, the somersaulting phase is often done in pike or tuck position which significantly reduces the moment of inertia about the somersault axis, leading to the intuitive result that more twists require larger angular momentum when airborne time is the same.

\section{Acknowledgement}

This research was supported by ARC Linkage grant LP100200245 and the New South Wales Institute of Sports.

\bibliographystyle{plain}
\bibliography{../bib_cv/all,../bib_cv/hd,../bib_cv/SoSa}

\end{document}